\documentclass[11pt]{article}

\usepackage[latin1]{inputenc}
\usepackage[T1]{fontenc}
\usepackage[english]{babel}



\usepackage[top=20mm, bottom=20mm, left=20mm, right=20mm]{geometry}

\usepackage{graphicx}
\usepackage{float}

\usepackage{amsmath,amsthm,amssymb}
\usepackage{bbm}
\usepackage{mathtools}

\usepackage[math]{blindtext}

\usepackage{setspace}
  \doublespacing

\usepackage{natbib}
\usepackage{hyperref} 

\newtheorem{theorem}{Theorem}[section]

\newtheorem{lemma}[theorem]{Lemma}
\newtheorem{proposition}[theorem]{Proposition}
\newtheorem{formula}{Formula}
\newtheorem{assumption}{Assumption}

\newcommand{\res}{\mathrm{Res}}
\newcommand{\ud}{\mathrm{d}}
\newcommand{\K}{\mathrm{K}}
\DeclarePairedDelimiter{\ceil}{\lceil}{\rceil}

\begin{document}
\newgeometry{margin=2cm,top=2cm,bottom=2cm}

\title{
        Explicit option valuation in the exponential NIG model}

\author{Jean-Philippe Aguilar 
        \thanks{Cov\'ea Finance, 8 rue Boissy d'Anglas, FR-75008 Paris, 
        Email: jean-philippe.aguilar@covea-finance.fr}}
\date{October 04, 2020}

\maketitle
\thispagestyle{empty}

\begin{abstract}
We provide closed-form pricing formulas for a wide variety of path-independent options, in the exponential L\'evy model driven by the Normal inverse Gaussian process. The results are obtained in both the symmetric and asymmetric model, and take the form of simple and quickly convergent series, under some condition involving the log-forward moneyness and the maturity of instruments. Proofs are based on a factorized representation in the Mellin space for the price of an arbitrary path-independent payoff, and on tools from complex analysis. The validity of the results is assessed thanks to several comparisons with standard numerical methods (Fourier and Fast Fourier transforms, Monte-Carlo simulations) for realistic sets of parameters. Precise bounds for the convergence speed and the truncation error are also provided.

\bigskip

\noindent {\bfseries Keywords:} L\'evy process; Normal inverse Gaussian process; Stochastic volatility; Option pricing.

\noindent {\bfseries AMS subject classifications (MSC 2020):} 60E07, 60E10, 60H35, 65C30, 65T50, 91G20, 91G30.

\noindent {\bfseries JEL Classifications:} C00, C02, G10, G13.
\end{abstract}

\newpage
\setcounter{page}{1}

\section{Introduction}\label{sec:intro}

Whether the dramatic COVID-19 events and the subsequent turmoils in global markets were unpredictable "black swan" events in the sense of \cite{Taleb10} or, on the contrary, could have been forecasted (or at least, following the terminology of \cite{Giannone08}, "nowcasted") will undoubtedly be the matter of intense debates. But what is already certain is that they demonstrate, yet again, that the kurtosis in the distribution of asset returns far exceeds the tails of the Normal one, and that market volatility is not constant over time; it should therefore be a minimal requirement for any reliable  market model that they include (at least) these two stylized facts.

It has now long been known that exponential - sometimes also called geometrical - L\'evy models fulfil these conditions. Such models have been introduced in quantitative finance during the late 1990s / early 2000s in several influential works, and make the assumption that asset log returns are driven by some drifted L\'evy process: a Normal inverse Gaussian (NIG) process in \cite{Barndorff95,Barndorff97}, a Variance Gamma (VG) process in \cite{Madan98}, a hyperbolic or a generalized hyperbolic process in \cite{Eberlein95,Eberlein01}, a CGMY process in \cite{Carr02} or a stable (or $\alpha$-stable) process in \cite{Mittnik00,Carr03}. Stable distributions, in particular, may be noted for their historical importance, having been considered a credible candidate for the modelling of asset prices as early as in the 1960s by \cite{Mandelbrot63} in the context of the cotton market, thus paving the way to the more generic setup of exponential L\'evy models. Readers who may be less familiar with the broad family of L\'evy processes and their applications to finance are invited to refer to the classical references \cite{Bertoin96, Schoutens03, Cont04, Rachev11}.

In the present work, we will be particularly interested in the class of exponential L\'evy models whose L\'evy process is distributed according to a NIG distribution, namely, the class of exponential NIG models. NIG distributions were originally introduced for physical purpose, more precisely to model the complex behavior of dunes and beach sands, in the seminal article \cite{Barndorff77}; as noted above, they have subsequently been introduced for financial purpose approximately two decades later, because they feature several degrees of freedom that have a direct empirical interpretation in terms of financial time series. First, they possess fat tails, allowing for the presence of extreme variations of prices (positive or negative jumps); when the tail parameter goes to infinity, then the NIG distribution degenerates into the Normal distribution and the exponential NIG model recovers the Black-Scholes model (\cite{Black73}). Second, NIG distributions can be skewed, allowing to capture the asymmetry that can be observed in the distribution of jumps (price drops occurring more often than raises). Last, but not least, a NIG process can be interpreted as a drifted Brownian motion whose time follows an inverse Gamma process - this is a consequence of the fact that the NIG distribution is actually a particular case of a so-called Normal variance-mean mixture, the mixing distribution being the inverse Gaussian (IG) distribution; the NIG process is therefore a time changed L\'evy process, which allows for stochastic volatility modelling and related phenomena, such as clustering or negative correlation between the returns and their volatility (see details in \cite{Carr04}). Let us also mention that, as observed by \cite{Mechkov15} the NIG process is also deeply related to the Heston stochastic volatility model (\cite{Heston93}). Indeed, in the fast reversion limit, Heston log returns become NIG distributed, and the NIG parameters have a direct connection with the shape of the volatility surface. This allows for a simpler calibration to market data, and for the capture of a realistic smile in the short maturity region.

Of course, since it was introduced, the exponential NIG model has been proved to provide a very good fitting to financial data many times. Let us mention, among others, initial tests for daily returns on Danish and German markets in \cite{Barndorff95,Rydberg97} and subsequently on the FTSE All-share index (also known as "Actuaries index") in \cite{Venter02}. More recently, the impact of high frequency trading has also been taken into account, and calibrations have been performed on intraday returns e.g. in \cite{Figueroa12} for different sampling frequencies. Let us also mention that multivariate extensions of the exponential NIG model, i.e., featuring a different time change for different assets, have also been considered (see \cite{Luciano10} and references therein).

As one could expect, pricing contingent claims turns out to be a tougher task in the exponential NIG model than it is in the usual Black-Scholes framework. Numerical methods are largely favored, including Monte-Carlo valuation methods (\cite{Ribeiro03}), numerical evaluation of Fourier (\cite{Lewis01}) and Fast Fourier (\cite{Carr99}) transforms. The success of Fourier transform methods is strongly linked to the relative simplicity of the characteristic function of most exponential L\'evy models, and has opened the way to a wide range of other transform based approaches: they include, among others, the COS method by \cite{Fang08}, the Hilbert transform method (see notably a recent application to time-changed L\'evy processes in \cite{Zeng14}) or the local basis Frame PROJection (PROJ) method by \cite{Kirkby15}. Efforts have also been made towards analytic evaluation or approximations: in \cite{Ivanov13}, a closed-form formula (in terms of Appel functions) for the European call is derived in the particular case where the NIG distribution has a tail parameter of $1/2$, and in \cite{Albrecher04} approximations and bounds are provided for Asian options.

In this paper, we would like to show that it is actually possible to obtain tractable closed-form pricing formulas in the exponential NIG model, for a broad range of path independent instruments. This is made possible by a remarkable property allowing to express the Mellin transform of an arbitrary path independent option as the product of the Mellin transforms of its payoff and of the NIG probability density. Inverting it by means of residue summation yields the option price, computed under the form of quick convergent residue series whose terms are directly expressed in terms of the model's parameters. This Mellin residue summation method has been used very recently within the framework of other exponential L\'evy models, namely in the Finite Moment Log Stable (FMLS) model in \cite{AK19} and in the exponential VG model in \cite{Aguilar20}; in the present paper, we will therefore demonstrate that the technique is also well-suited to the exponential NIG model. Moreover, we will establish pricing formulas for both the symmetric and the asymmetric NIG processes, while the formulas in the VG case in \cite{Aguilar20} were mainly obtained for the symmetric VG process. Due to the nature of the residues series, however, we will need to introduce a restriction on the model parameters to ensure the convergence to the price. We will show that this condition is compliant with most of the implied parameters calibrated in the literature; moreover, when options are not far from the money, it is automatically satisfied.

The paper is organized as follows: in section \ref{sec:model}, we start by recalling fundamental concepts on the NIG process and its implementation via exponential L\'evy models. In section \ref{sec:symmetric}, we focus on the symmetric NIG process: after establishing the pricing formula in the Mellin space for an arbitrary path independent instrument, we evaluate, analytically, the price of the European and digital options, as well as payoffs featuring more exotic attributes (power options, log contracts, \dots). In section \ref{sec:asymmetric} we extend the pricing formula to the more general case of the asymmetric model, and provide analytic formulas for the digital and European prices. In section \ref{sec:num}, practical implementation is discussed, and precise bounds for the convergence speed and the truncation errors of the series are obtained; we also assess the validity of the results by comparing them with classic numerical methods (Fourier inversion, Monte Carlo simulations). For the reader's convenience, the paper is also equipped with two appendices: in appendix \ref{app:Mellin} we provide a short overview of the Mellin transform, and in appendix \ref{app:special} we recall some important special function identities that are used throughout the paper.

\section{Model definition}\label{sec:model}

In this section we recall important concepts on NIG distributions and processes; more details can be found in the initial articles by Barndorff-Nielsen or in subsequent review articles like \cite{Hanssen01,Papantoleon08}. We also introduce the exponential NIG model, following the classical setup of exponential L\'evy models such as defined e.g. in \cite{Schoutens03,Tankov10}.

\subsection{The Normal inverse Gaussian process}\label{subsec:NIG process}

The Normal inverse Gaussian (NIG) process can be defined by in several different ways: classically, it is defined either as a process whose increment follow a NIG distribution, in terms of its L\'evy measure, or as a time-changed L\'evy process. Let us also mention that, as remarked in \cite{Mechkov15}, the NIG process can also be seen as the limit of a Fast Reverting Heston (FRH) process.

\paragraph{NIG density}
The NIG distribution, denoted by $\mathrm{NIG}(\alpha,\beta,\delta,\mu)$, is a four-parameter distribution whose density function is:
\begin{equation}\label{NIG_distribution_density}
    f(x) \, := \, \frac{\alpha\delta}{\pi} \,
    e^{\delta\sqrt{\alpha^2 - \beta^2} +\beta (x-\mu)} \, 
    \frac{\K_1 \left( \alpha\sqrt{\delta^2 + (x-\mu)^2} \right) }{\sqrt{\delta^2 + (x-\mu)^2}}
    .
\end{equation}
The function $z\rightarrow \K_1(z)$ is the modified Bessel function of the second kind and of index 1 (sometimes also called Macdonald function, see definitions and properties in appendix \ref{app:special}). $\alpha > 0$ is a {\it tail} or {\it steepness} parameter controlling the kurtosis of the distribution; the large $\alpha$ regime gives birth to light tails, while small $\alpha$ corresponds to heavier tails. $\beta\in (-\alpha,\alpha-1)$ is the {\it skewness} parameter: $\beta < 0$ (resp. $\beta > 0$) implies that the distribution is skewed to the left (resp. the right), and $\beta =0$ that the distribution is symmetric around the {\it location} parameter $\mu\in\mathbb{R}$. $\delta > 0$ is the {\it scale} parameter and plays an analogue role to the variance term $\sigma^2$ in the Normal distribution; when $\beta=0$, the Normal distribution is itself recovered in the large steepness regime:
\begin{equation}
    \mathrm{NIG}(\alpha,0,\delta,\mu) \, \underset{\alpha \rightarrow \infty}{\longrightarrow} \, \mathcal{N} (\mu,\sigma^2) \, ,  \hspace{0.3cm} \sigma^2:= \frac{\delta}{\alpha} 
    .
\end{equation}

We say that a stochastic process $\{ X_t \}_{t\geq 0}$ is a NIG process if it has NIG distributed increments, that is if $ X_{t+h} \, - \, X_t \,  \sim \, \mathrm{NIG}(\alpha,\beta,\delta h,\mu h) $ for all $h\geq 0$; it follows from \eqref{NIG_distribution_density}  that the density of the process conditionally to $X_0=0$ is (with a slight abuse of notations):
\begin{equation}\label{NIG_process_density}
    f(x,t) \, := \, \frac{\alpha\delta t}{\pi} \,
    e^{\delta t \sqrt{\alpha^2 - \beta^2} +\beta (x-\mu t )} \, 
    \frac{\K_1 \left( \alpha\sqrt{(\delta t)^2 + (x-\mu t)^2} \right) }{\sqrt{(\delta t)^2 + (x-\mu t )^2}}
    .
\end{equation}
It is also possible to define the NIG process as a time-changed drifted Brownian motion: if $\{I_t\}_{t\geq 0}$ is a process distributed according to an Inverse Gamma density of shape $\delta\sqrt{\alpha^2 - \beta^2}$ and mean rate 1 and if $\{ W_t \}_{t \geq 0}$ is a standard Wiener process, then the process
\begin{equation}\label{NIG_subordination}
    X_t \, = \, \beta \delta^2 \, I_t \, + \, \delta W_{I_t}
\end{equation}
is a centered NIG process ($\mu = 0$). The process $\{I_t\}_{t\geq 0}$ is a tempered stable subordinator; it has positive jumps, and therefore is interpreted as a {\it business time} that can differ from the {\it operational time}, the occurence of jumps corresponding to periods of intense business activity. A similar interpretation holds for instance in the case of the Variance Gamma process, which features another example of tempered stable subordination (via a Gamma process). 

\paragraph{L\'evy symbol}
The NIG process is a (pure jump) L\'evy process whose characteristic function $\Psi(u,t):=\mathbb{E} [ e^{i u X_t} ] $ can be written down as $\Psi(u,t) \, = \, e^{ t  \psi(u)}$, where the characteristic exponent, or L\'evy symbol, is known in exact form:
\begin{equation}\label{NIG_characteristic}
    \psi (u) \
     := \, \log \Psi (u,1) 
    \, = \, i \mu u \,- \, \delta 
    \left( \sqrt{ \alpha^2 - (\beta + i u)^2 } - \sqrt{ \alpha^2 - \beta^2 }
    \right).
\end{equation}
The process admits the L\'evy-Khintchine triplet $ (a,0,\nu (\ud x) )$, where the drift $a$ and the L\'evy measure $\nu$ are defined by
\begin{equation}
    \left\{
    \begin{aligned}
        & a \, := \mu \, + \, \frac{2\alpha \delta}{\pi} \, 
        \int\limits_0^1 \, \mathrm{sinh}(\beta x) \K_1(\alpha x) \, \ud x
        \\
        & \nu (\ud x) \, :=  \frac{\alpha\delta}{\pi} \, e^{\beta x} \, \frac{\K_1(\alpha |x| )}{|x|} \, \ud x
        ,
    \end{aligned}
    \right.
\end{equation}
allowing to write down the characteristic exponent \eqref{NIG_characteristic} in terms of its L\'evy-Khintchine representation:
\begin{equation}
    \psi (u) \,  = \, 
    i a u \, + \, 
    \int\limits_{\mathbb{R}} \, ( e^{i u x} - 1 - i u x \mathbbm{1}_{ \{ |x| < 1  \} } ) \, \nu (\ud x)  
    .
\end{equation}
Let us observe that it follows from the definition of the L\'evy measure $\nu$ that the NIG process has infinite variation and infinite intensity (i.e. $\nu(\mathbb{R}) = \infty$), and therefore possesses a very rich dynamics with infinite number of jumps on any time interval - this is why no Brownian component is even needed in the L\'evy-Khintchine triplet. We should also note that the NIG process has all its moments finite, which is not the case with (double-sided) $\alpha$-stable processes for instance: this is because the Bessel function admits the asymptotic behavior (see \eqref{Bessel_large_z})
\begin{equation}
    \K_1 ( |x| ) \, \underset{|x|\rightarrow\infty}{\sim} \, \sqrt{\frac{\pi}{2 |x|}} \, e^{-|x|},
\end{equation}
and therefore the tails of the NIG measure $\nu$ are less heavy
than the tails of the $\alpha$-stable measure (which has polynomial decrease in $ 1/ |x|^{1+\alpha})$. In other words, the jumps in the NIG process are not as big as for $\alpha$-stable processes, but allow finiteness of moments and therefore of option prices; in the $\alpha$-stable case, this would be achieved only for spectrally negative processes (i.e., having negative jumps only).

\paragraph{Fast reverting Heston limit}
Following \cite{Mechkov15}, we choose the following formulation for the Heston dynamics:
\begin{equation}
    \left\{
    \begin{aligned}
        & \ud x_t \, = \, -\frac{1}{2} \sigma^2 z_t \ud t \, + \, \sigma\sqrt{z_t}\ud W_t^{(1)} \\
        & \ud z_t \, = \, a \left( (1-z_t)\ud t \, + \, \gamma\sqrt{z_t} \ud W_t^{(2)}     \right)
    \end{aligned}
    \right.
\end{equation}
where the two Brownian motions have correlation $\rho$, and we consider the fast reversion limit $a\rightarrow\infty$ in the CIR process driving the stochastic multiplier $z_t$. Then, in this limit, the process $\left\{x_t\right\}_{t\geq 0}$ is distributed according to a NIG distribution:
\begin{equation}
    x_t \, \sim \, \mathrm{NIG} \, \left( \frac{\sqrt{4-4\rho\gamma\sigma+\gamma^2\sigma^2}}{2\gamma\sigma(1-\rho^2)} , 
    - \frac{\gamma\sigma - 2\rho}{2\gamma\sigma(1-\rho^2)}  ,
    \frac{\sigma\sqrt{1-\rho^2}}{\gamma}t , 
    -\frac{\sigma\rho}{\gamma} t
    \right)
    .
\end{equation}
Let us mention that the NIG process is also associated to the long maturity asymptotic of the original (non fast reverting) Heston model (see e.g. \cite{Keller08}). 

\subsection{The exponential NIG model}\label{subsec:exp NIG model}

\paragraph{Model specification}
Let $T>0$ and $S: t \in [0,T]\rightarrow S_t$ be the market price of some financial asset, seen as the realization of a time dependent random variable $\{S_t\}_{t\in[0,T]}$ on the canonical space $\Omega=\mathbb{R}_+$ equipped with its natural filtration. We assume that there exists a risk-neutral measure $\mathbb{Q}$ under which the instantaneous variations of $S_t$ can be written down as:
\begin{equation}\label{Exp_NIG_SDE}
    \frac{\ud S_t}{S_t} \, = \, (r-q) \, \ud t \, + \, \ud X_t
\end{equation}
where $r\geq 0$ is the risk-free interest rate and $q\geq 0$ is the dividend yield (both assumed to be deterministic and continuously compounded), and where $\{ X_t \}_{t\in[0,T]}$ is the NIG process. The solution to the stochastic differential equation $\eqref{Exp_NIG_SDE}$ it the exponential process
\begin{equation}\label{Exp_process}
    S_T \, = \, S_t \, e ^{ (r - q + \omega)\tau \, + \, X_\tau} , \hspace{0.3cm} \omega:= - \psi ( -i )
    ,
\end{equation}
where $\tau:= T - t$ is the time horizon and $\omega$ is the martingale adjustment (also called convexity adjustment, or compensator) determined by the martingale condition $\mathbb{E}^{\mathbb{Q}} [ S_T \vert S_t ] = e^{(r-q)\tau} S_t$; it follows from the definition of the L\'evy symbol \eqref{NIG_characteristic} that this adjustment is equal to:
\begin{equation}\label{omega}
    \omega \, = \,  - \mu  \, + \, \delta 
    \left( \sqrt{ \alpha^2 - (\beta + 1 )^2 } - \sqrt{ \alpha^2 - \beta^2 }
    \right)
    .
\end{equation}
It is interesting to note that, in the large steepness regime, \eqref{omega} has the following asymptotic behavior:
\begin{equation}\label{omega_large_alpha}
    \omega \, \underset{\alpha\rightarrow\infty}{\sim} \, - \mu \, - \, \frac{\sigma^2}{2} (1+2\beta)  \, ,  \hspace{0.3cm} \sigma^2:= \frac{\delta}{\alpha} 
    .
\end{equation}
Taking $\mu=0$ (centered process) and $\beta=0$ (symmetric process), \eqref{omega_large_alpha} recovers the the Gaussian martingale adjustment $-\sigma^2 / 2$, and the exponential NIG model \eqref{Exp_NIG_SDE} degenerates into the Black-Scholes model.

\paragraph{Contingent claim valuation}
Given a path-independent payoff function $\mathcal{P}$, i.e., a positive function depending only on the terminal value $S_T$ of the market price and on some strike parameters $K_1, \dots , K_N \, > \, 0$, then the value at time $t$ of a contingent claim delivering a payoff $\mathcal{P}$ at maturity is equal to the following risk-neutral expectation:
\begin{equation}\label{Risk-neutral_1}
   \mathcal{C} \, = \, \mathbb{E}^{\mathbb{Q}} \left[e^{-r \tau} \mathcal{P} (S_T, K_1, \dots , K_n  ) \, \vert \, S_t \right]
   .
\end{equation}
The conditional expectation \eqref{Risk-neutral_1} can be achieved by integrating all possible realizations for the payoff over the probability density of the NIG process, thus resulting in:
\begin{equation}\label{Risk-neutral_2}
     \mathcal{C} \, = \, e^{-r\tau} \, \int\limits_{-\infty}^{+\infty} \, \mathcal{P} (S_t \, e^{(r-q+\omega)\tau \, + \, x},K_1, \dots , K_n) \, f(x,\tau) \, \ud x
     .
\end{equation}

\section{Option pricing in the symmetric model}\label{sec:symmetric}

In this section, we assume that $\beta=0$, i.e., that the process $\{ X_t \}_{t\in[0,T]}$ in \eqref{Exp_NIG_SDE} is distributed according to the symmetric distribution $\mathrm{NIG}(\alpha, 0 , \delta t, \mu t)$. First, we establish a general pricing formula for an arbitrary path independent instrument; then, we apply this formula to the analytic evaluation of several options and contracts.

\subsection{Pricing formula}\label{subsec:sym_pricing}

Let us start by establishing a representation for the symmetric NIG density $f(x,t)$ under the form of a Mellin-Barnes integal.

\begin{lemma}\label{lemma:sym_density}
    For any $c_1 \in \mathbb{R}_+$, the following holds true:
    \begin{equation}\label{sym_density}
        f(x,t) \, = \, \frac{\alpha}{2\pi} \, e^{\alpha \delta t} \,
        \int\limits_{c_1 - i\infty}^{c_1 + i\infty} \,
        \Gamma\left( \frac{s_1}{2} \right) \,
        \K_{1-\frac{s_1}{2}}(\alpha\delta t) \,
        \left( \frac{2\delta t}{\alpha} \right)^{\frac{s_1}{2}} \,
        |x - \mu t|^{-s_1} \,
        \frac{\ud s_1}{2i\pi}
        .
    \end{equation}
\end{lemma}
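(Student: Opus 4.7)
The plan is to reduce the claim to two classical Mellin-type integrals for the Bessel function $\mathrm{K}_\nu$ and the exponential, and then fuse them. Setting $R:=\delta t$ and $y:=|x-\mu t|$, the symmetric ($\beta=0$) density \eqref{NIG_process_density} is
\begin{equation*}
f(x,t)\;=\;\frac{\alpha R}{\pi}\,e^{\alpha R}\,\frac{\mathrm{K}_1\!\bigl(\alpha\sqrt{R^2+y^2}\bigr)}{\sqrt{R^2+y^2}},
\end{equation*}
so it suffices to expand the ratio $\mathrm{K}_1(\alpha\sqrt{R^2+y^2})/\sqrt{R^2+y^2}$ as a Mellin--Barnes integral in $y^{-s_1}$ and match constants.

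First I would invoke the Schl\"afli-type integral representation
\begin{equation*}
\mathrm{K}_\nu(z)\;=\;\tfrac{1}{2}\bigl(\tfrac{z}{2}\bigr)^{\!\nu}\int_0^{\infty} e^{-u-z^{2}/(4u)}\,u^{-\nu-1}\,\mathrm{d}u
\end{equation*}
(stated in appendix \ref{app:special}) with $\nu=1$ and $z=\alpha\sqrt{R^2+y^2}$. The square root drops out and one is left with
\begin{equation*}
\frac{\mathrm{K}_1\!\bigl(\alpha\sqrt{R^2+y^2}\bigr)}{\sqrt{R^2+y^2}}\;=\;\frac{\alpha}{4}\int_0^{\infty} e^{-u-\alpha^{2}R^{2}/(4u)}\,e^{-\alpha^{2}y^{2}/(4u)}\,u^{-2}\,\mathrm{d}u.
\end{equation*}

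Next I would apply the Cahen--Mellin representation $e^{-w}=\int_{c-i\infty}^{c+i\infty}\Gamma(s)w^{-s}\,\tfrac{\mathrm{d}s}{2i\pi}$, valid for $c>0$ and $w>0$, to the factor $e^{-\alpha^{2}y^{2}/(4u)}$, with the change of variables $s=s_1/2$, yielding
\begin{equation*}
e^{-\alpha^{2}y^{2}/(4u)}\;=\;\frac{1}{2}\int_{c_1-i\infty}^{c_1+i\infty}\Gamma\!\bigl(\tfrac{s_1}{2}\bigr)\bigl(\tfrac{4u}{\alpha^{2}}\bigr)^{\!s_1/2}\,y^{-s_1}\,\frac{\mathrm{d}s_1}{2i\pi}.
\end{equation*}
Inserting this in the preceding display and swapping the two integrations (to be justified by the Gaussian decay of $|\Gamma(s_1/2)|$ on vertical lines and by the tempered behaviour of the $u$-integrand, so that Fubini applies for any $c_1>0$) reduces the inner $u$-integral to
\begin{equation*}
\int_0^{\infty} e^{-u-\alpha^{2}R^{2}/(4u)}\,u^{s_1/2-2}\,\mathrm{d}u,
\end{equation*}
which is again of Schl\"afli form and evaluates, via the same identity now read \emph{backwards} with $\nu=1-s_1/2$ and $z=\alpha R$, to $2(\alpha R/2)^{s_1/2-1}\mathrm{K}_{1-s_1/2}(\alpha R)$.

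Finally I would collect the constants: gathering the prefactor $\alpha/4$, the $1/2$ from the Mellin--Barnes expansion, and the simplification $(4/\alpha^{2})^{s_1/2}(\alpha R/2)^{s_1/2-1}=(2R/\alpha)^{s_1/2}\cdot 2/(\alpha R)$, the $R$ in the denominator cancels with the $R$ in the overall prefactor $\alpha R/\pi$ of $f(x,t)$, producing exactly the claimed $\alpha/(2\pi)$. The main technical step is the Fubini swap; everything else is algebraic bookkeeping and two applications of the same Bessel identity.
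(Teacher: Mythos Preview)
Your argument is correct. The paper's own proof is shorter only because it outsources the key step: it directly quotes the Mellin pair
\[
\frac{\K_\nu(a\sqrt{x^2+b^2})}{(x^2+b^2)^{\nu/2}} \;\longleftrightarrow\; 2^{s/2-1}\,a^{-s/2}\,b^{\,s/2-\nu}\,\Gamma\!\left(\tfrac{s}{2}\right)\K_{\nu-s/2}(ab)
\]
from its table of transforms (with $\nu=1$, $a=\alpha$, $b=\delta t$) and then applies Mellin inversion. Your approach instead \emph{derives} this very identity on the spot: you open up $\K_1$ via the Schl\"afli integral, peel off the $y$-dependence with Cahen--Mellin, and then recognise the remaining $u$-integral as Schl\"afli again for $\K_{1-s_1/2}$. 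The two routes are therefore the same argument at different levels of granularity: the paper cites the composite Mellin pair as a black box, while you reconstruct it from the two more elementary pairs (the Bessel integral and the Gamma integral). Your version has the advantage of being self-contained and of making the Fubini justification explicit; the paper's version is terser but relies on the reader trusting the table entry.
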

\begin{proof}
    Taking $\beta=0$ in \eqref{NIG_process_density} yields:
    \begin{equation}\label{sym_density_1}
    f(x,t) \, = \, \frac{\alpha\delta t}{\pi} \,
    e^{\alpha \delta t } \, 
    \frac{\K_1 \left( \alpha\sqrt{(\delta t)^2 + (x-\mu t)^2} \right) }{\sqrt{(\delta t)^2 + (x-\mu t )^2}}
    .
    \end{equation}
    Using the Mellin transform for the Bessel function (see table \ref{tab:Mellin} in appendix \ref{app:Mellin} with $\nu=1$) and the Mellin inversion formula \eqref{inversion}, we can write:
    \begin{equation}\label{BesselK_Mellin}
        \frac{\K_1 \left( \alpha\sqrt{(\delta t)^2 + (x-\mu t)^2} \right) }{\sqrt{(\delta t)^2 + (x-\mu t )^2}} 
        \, = \, 
        \frac{1}{2\delta\tau} \, 
        \int\limits_{c_1 - i \infty}^{c_1+ i \infty} \, 
        \Gamma\left(\frac{s_1}{2}\right) \, 
        \K_{1-\frac{s_1}{2}}(\alpha\delta\tau) \, 
         \left( \frac{2\delta t}{\alpha} \right)^{\frac{s_1}{2}} \, 
         |x-\mu t|^{-s_1} \, 
         \frac{\ud s_1}{2 i \pi}
    \end{equation}
    for any $c_1>0$. Inserting into \eqref{sym_density_1} yields the representation \eqref{sym_density}.
\end{proof}

Let us now introduce the double-sided Mellin transform of the payoff function:
\begin{equation}\label{sym_payoff_transform}
    P^*(s_1) \, = \, \int\limits_{-\infty}^{\infty} \, \mathcal{P} \left( S_t e^{(r - q+\omega)\tau +x} , K_1, \dots , K_n \right) \, |x - \mu\tau|^{-s_1} \, \ud x
\end{equation}
and assume that it exists for $Re(s_1) \in (c_-,c_+)$ for some real numbers $c_- < c_+$. Then, as a consequence of the risk-neutral pricing formula \eqref{Risk-neutral_2} and of lemma \ref{lemma:sym_density}, we immediately obtain:

\begin{proposition}[Factorization in the Mellin space]
    \label{prop:symm_factorization}
    Let $c_1\in (\tilde{c}_-,\tilde{c}_+)$ where $
    (\tilde{c}_-,\tilde{c}_+) := (c_-,c_+) \cap \mathbb{R}_+$ is assumed to be nonempty. Then the value at time $t$ of a contingent claim delivering a payoff $\mathcal{P}(S_T,K_1, \dots , K_n)$ at its maturity $t=T$ is equal to:
    \begin{equation}\label{sym_factorization}
        \mathcal{C}  \, = \, 
        \frac{\alpha}{2\pi} \, e^{(\alpha \delta - r) \tau} \, \int\limits_{c_1-i\infty}^{c_1+i\infty} \, 
        \Gamma\left( \frac{s_1}{2} \right) \,
        P^*(s_1) \,
        \K_{1-\frac{s_1}{2}}(\alpha\delta \tau) \,
        \left( \frac{2\delta \tau}{\alpha} \right)^{\frac{s_1}{2}}  \, \frac{\ud s_1}{2i\pi}
        .
    \end{equation}
\end{proposition}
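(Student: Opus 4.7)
The plan is to substitute the Mellin-Barnes representation of the density from Lemma~\ref{lemma:sym_density} directly into the risk-neutral pricing integral \eqref{Risk-neutral_2}, and then swap the two integrations so that the $x$-integral becomes, by definition, the Mellin transform $P^*(s_1)$ of the payoff. Concretely, I would first plug \eqref{sym_density} (written at $t=\tau$) into \eqref{Risk-neutral_2}, pulling out the prefactor $\tfrac{\alpha}{2\pi}e^{\alpha\delta\tau}$ and combining $e^{-r\tau}$ with $e^{\alpha\delta\tau}$ to obtain the constant $e^{(\alpha\delta-r)\tau}$ appearing in \eqref{sym_factorization}. At this stage $\mathcal{C}$ is a double integral, over $x\in\mathbb{R}$ and over the vertical contour $\mathrm{Re}(s_1)=c_1$, of the product of the payoff, $\Gamma(s_1/2)\K_{1-s_1/2}(\alpha\delta\tau)(2\delta\tau/\alpha)^{s_1/2}$, and $|x-\mu\tau|^{-s_1}$.

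Second, I would invoke Fubini's theorem to exchange the $x$- and $s_1$-integrations. Once that is justified, the inner $x$-integral is exactly
\begin{equation*}
\int_{-\infty}^{+\infty}\mathcal{P}\!\left(S_t e^{(r-q+\omega)\tau+x},K_1,\dots,K_n\right)|x-\mu\tau|^{-s_1}\,\ud x \;=\; P^*(s_1)
\end{equation*}
by definition \eqref{sym_payoff_transform}. Collecting the remaining factors reproduces the right-hand side of \eqref{sym_factorization} verbatim, which completes the argument.

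The one non-routine step is therefore the Fubini justification, and this is precisely why the hypothesis on $c_1$ appears. On the vertical line $\mathrm{Re}(s_1)=c_1$, the $\Gamma$ factor provides exponential decay in $|\mathrm{Im}(s_1)|$ (via Stirling), and $\K_{1-s_1/2}(\alpha\delta\tau)$ contributes at worst polynomial growth, so the $s_1$-integrand is Schwartz-like in the imaginary direction uniformly on compact $x$-sets. The constraint $c_1>0$ (coming from the Mellin transform of the Bessel function used in \eqref{BesselK_Mellin}) ensures that $|x-\mu\tau|^{-c_1}$ is locally integrable at $x=\mu\tau$, while $c_1\in(c_-,c_+)$ is exactly the fundamental strip in which the $x$-integral $P^*(s_1)$ converges absolutely and controls the behaviour of $\mathcal{P}$ at $\pm\infty$. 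The intersection $(\tilde{c}_-,\tilde{c}_+)=(c_-,c_+)\cap\mathbb{R}_+$ being nonempty is thus precisely what is needed to apply Fubini, and this is the only point where care is required; everything else is bookkeeping of constants.
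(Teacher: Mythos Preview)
Your proposal is correct and follows exactly the paper's approach: the paper simply states that the proposition is an immediate consequence of inserting the Mellin--Barnes representation of Lemma~\ref{lemma:sym_density} into the risk-neutral expectation \eqref{Risk-neutral_2}, which is precisely the substitution-then-Fubini argument you describe. If anything, your version is more explicit than the paper's, which does not spell out the Fubini justification or the role of the hypothesis $c_1\in(\tilde c_-,\tilde c_+)$.
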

Throughout the paper, our purpose will be to express the complex integral \eqref{sym_factorization} as a sum of residues associated to the singularities of the integrand. Schematically, we will therefore be able to express the price of a contingent claim under the form of a series:
\begin{equation}
    \frac{\alpha}{2\pi} \, e^{(\alpha \delta - r) \tau} \, \times \, \sum \,
    \left[
    \textrm{residues of }  \Gamma\left( \frac{s_1}{2} \right) \,
     P^*(s_1)  \, 
    \times
    \textrm{particular values of } \K_{1-\frac{s_1}{2}}(\alpha\delta \tau) \,
    \times
    \textrm{powers of } \frac{2\delta\tau}{\alpha}
    \right]
    .
\end{equation}
As we will see, the residues turn out to have to be computed in the multidimensional sense, because, depending on the payoff's complexity, the evaluation of $P^*(s_1)$ can call for the introduction of a second Mellin variable $s_2$ (in the asymmetric case, we will see that one even needs a third Mellin variable $s_3$). However, as only Gamma functions are involved, these residues are straightforward to compute, even in the $\mathbb{C}^n$ sense.

Before proceeding to pricing itself, let us introduce the notation for the forward strike $F$ and the log forward moneyness $k$:
\begin{equation}\label{moneyness}
   F \, := \, Ke^{-(r-q)\tau} ,
   \hspace*{1cm}
   k \, := \, \log\frac{S_t}{F} \, + \, \omega\tau \, = \, \log\frac{S_t}{K} + (r - q + \omega) \tau
   .
\end{equation}
It will also be useful to introduce $k_0:= k + \mu\tau$; taking $\beta=0$ in the definition of the martingale adjustment \eqref{omega}, we have:
\begin{equation}
    k_0 \, = \, \log\frac{S_t}{K} \, + \, 
    \left(r - q + \delta \left( \sqrt{\alpha^2-1} - \alpha \right) \right) \tau 
    .
\end{equation}
Note that $k_0$ is independent of the location $\mu$ (in both the symmetric and asymmetric cases). Last, we need to introduce a restriction on the parameters, that will be fundamental for the series to converge:
\begin{assumption}\label{ass:growth}
    In all of the following, and unless otherwise stated, we will assume that the model's inputs are such that
    \begin{equation}
        \frac{ \vert k_0 \vert}{\delta\tau}  \, < \, 1
        .
    \end{equation}
\end{assumption}

\subsection{Digital and European options}\label{subsec:sym_dig_eur}

We start our applications of proposition \ref{prop:symm_factorization} with the determination of the price of the digital (also called binary) options, and of the vanilla European option.

\paragraph{Digital option (asset-or-nothing)} The asset-or-nothing call option consists in receiving a unit of the underlying asset $S_T$, on the condition that it exceeds a predetermined strike price $K$. The payoff can therefore be written down as:
\begin{equation}
    \mathcal{P}_{a/n} (S_T,K) \, := \, S_T \, \mathbbm{1}_{ \{ S_T  > K \}  }
    .
\end{equation}

\begin{formula}[Asset-or-nothing call] 
    \label{formula:sym_A/N}
    The value at time $t$ of an asset-or-nothing call option is: \\
    \begin{equation}\label{sym_A/N}
        C_{a/n} \, = \, \frac{ K\alpha e^{(\alpha \delta - r) \tau}}{\sqrt{\pi}} \, 
        \sum\limits_{\substack{n_1 = 0 \\ n_2 = 0}}^{\infty} \,
        \frac{ k_0^{n_1} }{n_1! \Gamma( 1+\frac{-n_1+n_2}{2} )} \,
        \K_{\frac{n_1-n_2+1}{2}}(\alpha\delta\tau)
        \left(  \frac{\delta\tau}{2\alpha} \right)^{\frac{-n_1+n_2+1}{2}}
        .
    \end{equation} 
\end{formula}
\begin{proof}
    \underline{Step 1:} Let us first assume that $k_0 < 0$. We remark that, using notations \eqref{moneyness}, we can write
    \begin{equation}
        \mathcal{P}_{a/n} \left( S_t e^{(r-q+\omega)\tau +x} , K \right)
        \, = \,
        K \, e^{k+x} \, \mathbbm{1}_{ \{ x > - k \} }
        .
    \end{equation}
     Using a Mellin-Barnes representation for the exponential term (see table \ref{tab:Mellin} in appendix \ref{app:Mellin}):
    \begin{equation}\label{MB_payoff_a/n}
        e^{k +x} \, = \, \int\limits_{c_2 - i\infty}^{c_2 + i\infty} (-1)^{-s_
        2} \Gamma(s_2) (k  + x)^{-s_2} \, \frac{\ud s_2}{2i\pi} \hspace{1cm} (c_2 > 0)
    \end{equation}
    and inserting into \eqref{sym_payoff_transform}, we get:
    \begin{align}\label{sym_A/N_payoff_Mellin}
        P^*(s_1) & = \, K \int\limits_{c_2 - i\infty}^{c_2 + i\infty} (-1)^{-s_2} \Gamma(s_2) 
        \int\limits_{ -k }^\infty (k + x)^{-s_2} (x - \mu\tau)^{-s_1} \, \ud x \, \frac{\ud s_2}{2i\pi} \\
        & = K \int\limits_{c_2 - i\infty}^{c_2 + i\infty} (-1)^{-s_2} \frac{\Gamma(s_2)\Gamma(1-s_2)\Gamma(s_1+s_2-1)}{\Gamma(s_1)} (-k_0)^{-s_1-s_2+1}   
        \, \frac{\ud s_2}{2i\pi}
    \end{align}
    where the $x$-integral exists because $-(k+\mu\tau) = -k_0 >0$ by hypothesis. Using proposition \ref{prop:symm_factorization} and the Legendre duplication formula \eqref{Legendre}, we obtain the price of the asset-or-nothing call:
    \begin{multline}\label{sym_A/N_int}
        \mathcal{C}_{a/n} \, = \, \frac{ K\alpha e^{(\alpha \delta - r) \tau}}{\sqrt{\pi}} \, 
        \int\limits_{c_1 - i\infty}^{c_1+i\infty}
        \int\limits_{c_2 - i\infty}^{c_2+i\infty}
        (-1)^{-s_2} \frac{\Gamma(s_2)\Gamma(1-s_2)\Gamma(s_1+s_2-1)}{\Gamma(\frac{s_1+1}{2})} (-k_0)^{-s_1-s_2+1}\K_{1-\frac{s_1}{2}}(\alpha\delta\tau)
        \\
        \times \left( \frac{\delta\tau}{2\alpha}  \right)^{\frac{s_1}{2}} \, \frac{\ud s_1}{2i\pi} \frac{\ud s_2}{2i\pi}
    \end{multline}
    which converges in the subset $\{ (s_1,s_2)\in\mathbb{C}^2, 0 < Re(s_2) < 1, Re(s_1+s_2) > 1  \}$ and can be analytically continued outside this polyhedron, except when the Gamma functions in the numerator are singular, that is, when their arguments equal a negative integer. If we consider the singularities induced by $\Gamma(s_2)$ at $s_2 = -n_2$, $n_2\in\mathbb{N}$ and by $\Gamma(s_1+s_2-1)$ at $s_1+s_2-1 = -n_1$, $n_1\in\mathbb{N}$, then, the associated residues are straightforward to compute via the change of variables $u:=s_1+s_2-1$, $v:=s_2$, and via the singular behavior \eqref{sing_Gamma} for the Gamma functions; they read:
    \begin{equation}\label{sym_A/N_res}
        \frac{ K\alpha e^{(\alpha \delta - r) \tau}}{\sqrt{\pi}}
        (-1)^{n_2}
        \frac{(-1)^{n_1}}{n_1!} \frac{(-1)^{n_2}}{n_2!}
        \frac{\Gamma(1+n_2)}{\Gamma(1+\frac{-n_1+n_2}{2})}
        (-k_0)^{n_1}
        \K_{\frac{n_1-n_2+1}{2}}(\alpha\delta\tau)
        \left( \frac{\delta\tau}{2\alpha} \right)^{\frac{-n_1+n_2+1}{2}}
        .
    \end{equation}
    Simplifying and summing all residues \eqref{sym_A/N_res} yields the announced series \eqref{sym_A/N}.
    
    \noindent \underline{Step 2:} Let us now assume that $k_0>0$: in that case, the $x$-integral on the interval $(-k_0,\infty)$ in \eqref{sym_A/N_payoff_Mellin} does not converge. But, as $\{ X_t \}_{t\in[0,T]}$ is a $\mathbb{Q}$-martingale, we can write:
    \begin{equation}
        \mathbb{E}^\mathbb{Q} [ S_T \, \mathbbm{1}_{ \{ S_T > K \} }  \, \vert \, S_t ]
        \, = \, 
        S_t \, e^{(r-q)\tau} \, - \, 
        \mathbb{E}^\mathbb{Q} [ S_T \, \mathbbm{1}_{ \{ S_T < K \} }  \, \vert \, S_t ]
        .
    \end{equation}
    To compute the expectation in the r.h.s., we apply exactly the same technique than in step 1 (in this case, the $P^*(s_1)$ function exists, as an integral over $(-\infty , -k_0)$), resulting in the same residue formula than \eqref{sym_A/N_res}.
    
    \noindent \underline{Step 3:} Last, we have to examine the convergence of the series; to that extent let us denote the general term of the series \eqref{sym_A/N} by:
    \begin{equation}
        R_{n_1,n_2} \, := \,   \frac{ k_0^{n_1} }{n_1! \Gamma( 1+\frac{-n_1+n_2}{2} )} \,
        \K_{\frac{n_1-n_2+1}{2}}(\alpha\delta\tau)
        \left(  \frac{\delta\tau}{2\alpha} \right)^{\frac{-n_1+n_2+1}{2}}    .
    \end{equation}
    Let us fix $n_2\in\mathbb{N}$ and let $n_1\rightarrow\infty$; without loss of generality and to simplify the notations we can assume e.g. $n_2=0$ and study the behavior of
    \begin{equation}
        R_{n_1} \, := \, \frac{ k_0^{n_1} }{n_1! \Gamma( 1 - \frac{n_1}{2} )} \,
        \K_{\frac{n_1+1}{2}}(\alpha\delta\tau)
        \left(  \frac{\delta\tau}{2\alpha} \right)^{\frac{-n_1+1}{2}}  
        .
    \end{equation}  
    We may note also that, due to the presence of the $\Gamma(1-\frac{n_1}{2})$ function in the denominator, only odd terms $n_1 = 2p+1$ survive when $n_1\geq 1$. Using the particular value of the Gamma function \eqref{gamma_half_integers}, we are left with:
    \begin{equation}\label{R_p}
        R_{2 p +1} \, = \,  \frac{1}{\sqrt{\pi}}\frac{1}{2p+1}\frac{(-1)^p}{4^p p!} \, k_0^{2p+1}\K_{p+1}(\alpha\delta\tau)\left(  \frac{\delta\tau}{2\alpha} \right)^{-p}  
        .
    \end{equation}
    Using the Stirling approximation \eqref{Stirling} for $p!$ and the large index behavior \eqref{Bessel_large_index} for $\K_{p+1}(\alpha\delta\tau)$ and simplifying, we get:
    \begin{equation}\label{R_p_infty}
        |R_{2p+1}| \, \underset{p\rightarrow\infty}{\sim} \, \frac{1}{\sqrt{2\pi p (2p+2)}} \, \frac{k_0}{e\alpha\delta\tau} \, \left( \frac{k_0^2}{(\delta\tau)^2} \right)^p
    \end{equation}
    and therefore the series converge if and only if $\frac{k_0^2}{(\delta\tau)^2} < 1$, which is equivalent to assumption \ref{ass:growth}. Last, if we fix $n_1$,  then the symmetry relation \eqref{Bessel_sym} for the modified Bessel function and similar arguments (special values of the Gamma function and Stirling approximation) show that the series converge for all parameter values when $n_2\rightarrow\infty$.
\end{proof}

\paragraph{European option} The European call pays $S_T-K$ at maturity, at the condition that  the spot price is greater that the strike price. The payoff can therefore be written down as:
\begin{equation}
    \mathcal{P}_{eur} (S_T,K) \, := \, [S_T \, - \, K]^+
    .
\end{equation}

\begin{formula}[European call] 
    \label{formula:sym_eur}
    The value at time $t$ of a European call option is: \\
    \begin{equation}\label{sym_eur}
        C_{eur} \, = \, \frac{ K\alpha e^{(\alpha \delta - r) \tau}}{\sqrt{\pi}} \, 
        \sum\limits_{\substack{n_1 = 0 \\ n_2 = 1}}^{\infty} \,
        \frac{ k_0^{n_1} }{n_1! \Gamma( 1+\frac{-n_1+n_2}{2} )} \,
        \K_{\frac{n_1-n_2+1}{2}}(\alpha\delta\tau)
        \left(  \frac{\delta\tau}{2\alpha} \right)^{\frac{-n_1+n_2+1}{2}}
        .
    \end{equation} 
\end{formula}
\begin{proof}
We remark that, using notations \eqref{moneyness}, we can write:
    \begin{equation}
        \mathcal{P}_{eur} (Se^{(r - q +\omega)\tau+x},K) 
        \, = \,
        K (e^{k + x} -1) \mathbbm{1}_{ \{x > -k \} }
        .
    \end{equation}
    Then, we use the Mellin-Barnes representation (see table \ref{tab:Mellin} in appendix \ref{app:Mellin}):
    \begin{equation}\label{MB_payoff_european}
        e^{k + x} - 1 \, = \, \int\limits_{c_2 - i\infty}^{c_2 + i\infty} (-1)^{-s_
        2} \Gamma(s_2) (k  + x)^{-s_2} \, \frac{\ud s_2}{2i\pi} \hspace{1cm} (-1 < c_2 < 0)
    \end{equation}
    and we proceed exactly the same way than for proving Formula \ref{formula:sym_A/N}; note that the $n_2$-summation in \eqref{sym_eur} now starts in $n_2=1$ instead of $n_2=0$, because the strip of convergence of \eqref{MB_payoff_european} is reduced to $<-1,0>$ instead of $<0,\infty>$ in \eqref{MB_payoff_a/n}.
\end{proof}
Let us examine the series \eqref{sym_eur} in the large steepness regime ($\alpha\rightarrow\infty$). It follows from the asymptotic behavior of the Bessel function for large arguments \eqref{Bessel_large_z} that:
\begin{equation}
    \K_{\frac{n_1-n_2+1}{2}}(\alpha\delta\tau) \, \underset{\alpha\rightarrow\infty}{\sim} \, \frac{\sqrt{\pi}}{\sqrt{2\alpha\delta\tau}} \, e^{-\alpha\delta\tau} 
    ,
\end{equation}
and from \eqref{omega_large_alpha} that:
\begin{equation}
    k_0 \, \underset{\alpha\rightarrow\infty}{\sim} \, \log\frac{S_t}{K} \, + \, \left(r - q -\frac{\delta}{2\alpha} \right) \tau
    .
\end{equation}
Therefore, denoting $\sigma^2 := \frac{\delta}{\alpha}$, we obtain
\begin{equation}\label{sym_eur_BS}
    C_{eur}^{(\alpha\rightarrow\infty)} \, =  \, 
    \frac{ K e^{ - r \tau}}{2} \, 
    \sum\limits_{\substack{n_1 = 0 \\ n_2 = 1}}^{\infty} \,
    \frac{ 1 }{n_1! \Gamma( 1+\frac{-n_1+n_2}{2} )} \,
    \left( \log\frac{S_t}{K} + \left(r-q-\frac{\sigma^2}{2} \right) \tau  \right)^{n_1}
    \left(  \frac{\sigma^2\tau}{2} \right)^{\frac{-n_1+n_2}{2}}
\end{equation}
which is the series expansion of the Black-Scholes formula for the European call that was derived in \cite{Aguilar19}.

\paragraph{Digital option (cash-or-nothing)}

The payoff of the cash-or-nothing call option is
\begin{equation}
    \mathcal{P}_{c/n}(S_T,K) \, = \, \mathbbm{1}_{ \{ S_T>K \} }
\end{equation}
and therefore the option price itself is:
\begin{equation}
    C_{c/n} \, = \, \frac{1}{K} \, \left(  C_{a/n} - C_{eur})  \right)
    .
\end{equation}
Using formulas~\ref{formula:sym_A/N} and \ref{formula:sym_eur}, it is immediate to see that:

\begin{formula}[Cash-or-nothing call] 
    \label{formula:sym_c/n}
    The value at time $t$ of a cash-or-nothing call option is: \\
    \begin{equation}\label{sym_c/n}
        C_{c/n} \, = \, \frac{  \alpha e^{(\alpha \delta - r) \tau}}{\sqrt{\pi}} \, 
        \sum\limits_{n = 0}^{\infty} \,
        \frac{ k_0^{n} }{n! \Gamma( 1-\frac{n}{2} )} \,
        \K_{\frac{n+1}{2}}(\alpha\delta\tau)
        \left(  \frac{\delta\tau}{2\alpha} \right)^{\frac{-n+1}{2}}
        .
    \end{equation} 
\end{formula}

In \eqref{sym_c/n}, only terms for $n=0$ and $n=2p+1$, $p\in\mathbb{N}$ actually survive (because of the divergence of the Gamma function in the denominator when $n=2p$, $p\geq 1$). Therefore, using the particular values of the Gamma function at negative half-integers \eqref{gamma_half_integers} and of the Bessel function for $\nu=\frac{1}{2}$ \eqref{Bessel_1/2},  we can re-write formula \ref{formula:sym_c/n} as:
\begin{equation}\label{sym_c/n_p}
    C_{c/n} \, = \ 
    e^{-r\tau} \,
    \left[
    \, \frac{1}{2} \, + \, \frac{\alpha}{\pi}e^{\alpha\delta\tau} \, 
    \sum\limits_{p=0}^{\infty} \, \frac{(-1)^p k_0^{2p+1}}{p!(2p+1)} \, \K_{p+1}(\alpha\delta\tau) \,
    \left(  \frac{2\delta\tau}{\alpha}  \right)^{-p}
    \right]
    .
\end{equation}
The representation \eqref{sym_c/n_p} is less compact than formula \ref{formula:sym_c/n}, however it allows for a direct computation of the put option: indeed, using
\begin{equation}
    \mathbb{E}^{\mathbb{Q}} [ \mathbbm{1}_{ \{S_T>K\} } \, \vert \, S_t  ] 
    \, = \, 1 \, - \, \mathbb{E}^{\mathbb{Q}} [ \mathbbm{1}_{ \{S_T < K\} } \, \vert \, S_t  ]
    ,
\end{equation}
then it follows immediately from \eqref{sym_c/n_p} that the cash-or-nothing put can be written down as:
\begin{equation}\label{sym_c/n_put_p}
    P_{c/n} \, = \ 
    e^{-r\tau} \,
    \left[
    \, \frac{1}{2} \, - \, \frac{\alpha}{\pi}e^{\alpha\delta\tau} \, 
    \sum\limits_{p=0}^{\infty} \, \frac{(-1)^p k_0^{2p+1}}{p!(2p+1)} \, \K_{p+1}(\alpha\delta\tau) \,
    \left(  \frac{2\delta\tau}{\alpha}  \right)^{-p}
    \right]
    .
\end{equation}

\subsection{At the money approximations}\label{subsec:ATMF}
Let us assume throughout this subsection that options are at the money forward (ATMF), that is, $S_t=F$; retaining only the leading term of formula \ref{formula:sym_eur}, we can approximate the  European call by 
\begin{equation}\label{sym_eur_ATMF}
    C_{eur} \, \simeq \, 
    \frac{S_t\delta\tau e^{\alpha\delta\tau}}{\pi} \mathrm{K}_0 (\alpha\delta\tau)
    .
\end{equation}
Using the asymptotic behavior of the Bessel function for large arguments \eqref{Bessel_large_z}, we recover the fact that
\begin{equation}\label{BS_ATMF}
    C_{eur} \, \underset{\alpha\rightarrow\infty}{\longrightarrow} \,  \frac{S_t}{\sqrt{2\pi}} \, \sigma \, \sqrt{\tau}
    ,
\end{equation}
where $\sigma^2 : = \delta/\alpha$; \eqref{BS_ATMF} is the well-known approximation by \cite{Brenner94} for the ATMF Black-Scholes call. The approximation \eqref{sym_eur_ATMF} is also useful for parameter estimation: denoting by $C_t$ the market price of an ATMF European call option at time t and using Hankel's expansion \eqref{Bessel_large_z} up to  $k=1$, we obtain the quadratic equation 
\begin{equation}
    X^2 \, - \, \alpha\sqrt{2\pi} \frac{C_t}{S_t} X \, - \, \frac{1}{8}
    \, = \, 0
\end{equation}
where $X := \sqrt{\alpha\delta\tau}$. The positive solution reads
\begin{equation}
    X \, = \, \frac{1}{2} \left( \alpha\sqrt{2\pi}\frac{C_t}{S_t} 
    + \sqrt{2\pi\alpha^2\frac{C_t^2}{S_t^2} + \frac{1}{2}}  
    \right)
\end{equation}
and, therefore, using a Taylor expansion and turning back to the initial variables, we have
\begin{equation}\label{NIG_IV}
    \delta \, = \, \frac{2\pi\alpha}{\tau}\frac{C_t^2}{S_t^2} \, + \, 
    \frac{1}{4\alpha\tau} \, + \, O \left( \frac{1}{\alpha^3} \right)
    .
\end{equation}
Taking only the first order term in \eqref{NIG_IV}, we recover the ATMF value for the implied volatility $\sigma_I$ in the Black-Scholes model:
\begin{equation}
    \sigma_I \, := \, \sqrt{\frac{\delta}{\alpha}} \, = \, \sqrt{\frac{2\pi}{\tau}} \frac{C_t}{S_t}
    .
\end{equation}

\subsection{Miscellaneous payoffs}\label{subsec:miscellaneous}
In this subsection, we provide other applications of proposition \ref{prop:symm_factorization}, by considering path-independent payoffs featuring some more exotic attributes.

\paragraph{Gap option}
A gap (sometimes called pay-later) call has the following payoff:
\begin{equation}\label{Payoff_Gap}
    \mathcal{P}_{gap} (S_T,K_1,K_2) \, = \, (S_T-K_1) \mathbbm{1}_{ \{ S_T > K_2 \} }
\end{equation}
and degenerates into the European call when trigger and strike prices coincide ($K_1=K_2=K$). From the definition \eqref{Payoff_Gap}, it is immediate to see that the value at time $t$ of the Gap call is:
\begin{equation}
    C_{gap}  = C_{a/n} \, - \, K_1 \, C_{c/n}
\end{equation}
where the value of the asset-or-nothing and cash-or-nothing calls are given by formulas~\ref{formula:sym_A/N} and \ref{formula:sym_c/n} for $K=K_2$.

\paragraph{Power options}
Power options deliver a non linear payoff and are an easy way to increase the leverage ratio of trading strategies; the payoffs of the digital power calls are
\begin{equation}
    \mathcal{P}_{pow.c/n}(S_T,K) \, = \, \mathbbm{1}_{ \{S_T^a > K\} }
    \hspace{1cm}   
    \mathcal{P}_{pow.a/n}(S_T,K) \, = \, S_T^a \mathbbm{1}_{ \{S_T^a > K\} }
\end{equation}
for some $a>0$, and the power European call is:
\begin{equation}
    \mathcal{P}_{pow.eur}(S_T,K) 
    \, := \,
    \left[  S_T^a - K \right]^+
    .
\end{equation}
Introducing the notation
\begin{equation}
     k_a \, := \, \log\frac{S_t}{K^{\frac{1}{a}}} + (r - q + \omega) \tau 
     \, , \hspace{0.5cm} k_{0,a} \, := \, k_a  +  \mu\tau
\end{equation}
then we can remark that:
\begin{equation}
    \mathcal{P}_{pow.a/n}(S_t e^{(r-q+\omega)\tau + x},K) \, = \, Ke^{a(k_a+x)} \, \mathbbm{1}_{ \{ x > -k_a \} } 
    .
\end{equation}
Therefore, using the representations (see table \ref{tab:Mellin} in appendix \ref{app:Mellin})
\begin{equation}
    e^{ a (k_a +x)} \, = \, \int\limits_{c_2 - i\infty}^{c_2 + i\infty} (-1)^{-s_
    2} a^{-s_2} \Gamma(s_2) (k_a  + x)^{-s_2} \, \frac{\ud s_2}{2i\pi} \hspace{1cm} (c_2 > 0)
\end{equation}
and
\begin{equation}
    e^{ a (k_a +x)}  - 1 \, = \, \int\limits_{c_2 - i\infty}^{c_2 + i\infty} (-1)^{-s_
    2} a^{-s_2} \Gamma(s_2) (k_a  + x)^{-s_2} \, \frac{\ud s_2}{2i\pi} \hspace{1cm} (-1 < c_2 < 0)
\end{equation}
and proceeding exactly the same way than for proving formulas \ref{formula:sym_A/N}, \ref{formula:sym_eur} and \ref{formula:sym_c/n}, we obtain:
\begin{formula}[Power options]\label{form:pow}
    The values at time $t$ of the power options are:
    \begin{itemize}
        \item[-] Asset-or-nothing power call:
        \begin{equation}\label{sym_pow_a/n}
            C_{pow.a/n} \, = \, \frac{ K\alpha e^{(\alpha \delta - r) \tau}}{\sqrt{\pi}} \, 
            \sum\limits_{\substack{n_1 = 0 \\ n_2 = 0}}^{\infty} \,
            \frac{ a^{n_2} k_{0,a}^{n_1} }{n_1! \Gamma( 1+\frac{-n_1+n_2}{2} )} \,
            \K_{\frac{n_1-n_2+1}{2}}(\alpha\delta\tau)
            \left(  \frac{\delta\tau}{2\alpha} \right)^{\frac{-n_1+n_2+1}{2}}
            ;
        \end{equation} 
        \item[-] European power call:
        \begin{equation}\label{sym_pow_eur}
            C_{pow.eur} \, = \, \frac{ K\alpha e^{(\alpha \delta - r) \tau}}{\sqrt{\pi}} \, 
            \sum\limits_{\substack{n_1 = 0 \\ n_2 = 1}}^{\infty} \,
            \frac{ a^{n_2} k_{0,a}^{n_1} }{n_1! \Gamma( 1+\frac{-n_1+n_2}{2} )} \,
            \K_{\frac{n_1-n_2+1}{2}}(\alpha\delta\tau)
            \left(  \frac{\delta\tau}{2\alpha} \right)^{\frac{-n_1+n_2+1}{2}}
            ;
            \end{equation}  
        \item[-] Cash-or-nothing power call:
        \begin{equation}\label{sym_pow_c/n}
            C_{pow.c/n} \, = \, \frac{  \alpha e^{(\alpha \delta - r) \tau}}{\sqrt{\pi}} \, 
            \sum\limits_{n = 0}^{\infty} \,
            \frac{ k_{0,a}^{n} }{n! \Gamma( 1-\frac{n}{2} )} \,
            \K_{\frac{n+1}{2}}(\alpha\delta\tau)
            \left(  \frac{\delta\tau}{2\alpha} \right)^{\frac{-n+1}{2}}
            .
        \end{equation}
    \end{itemize}
\end{formula}
It is clear that, for the series \eqref{sym_pow_a/n}, \eqref{sym_pow_eur} and \eqref{sym_pow_c/n} to converge, assumption \ref{ass:growth} has to be satisfied by $k_{0,a}$ and no longer by $k_0$, that is:
\begin{equation}
    \left\vert  \frac{k_{0,a}}{\delta\tau}  \right\vert \, < \, 1
    .
\end{equation}

\paragraph{Log options, log contract}
Log options are, basically, options on the rate of return of the underlying (\cite{Wilmott06}). The payoff of a log call and of a log put are:
\begin{equation}
    \mathcal{P}_{log \, call}(S_T, K) \, := \, [\log S_T \, - \log K]^+ \, ,
    \hspace{1cm}
    \mathcal{P}_{log \, put}(S_T, K) \, := \, [\log K \, - \log S_T]^+ 
    .
\end{equation}
The log contract, introduced by \cite{Neuberger94}, is a forward contract that is obtained by being long of a log call and short of a log put, resulting in
\begin{equation}
    \mathcal{P}_{log \, contract}(S_T, K) \, = \, \log \frac{S_T}{K}
    .
\end{equation}
Note that a delta-hedged log contract with $K=1$ is actually a synthetic variance swap: indeed, by denoting the quadratic variation of $S$ by $< S >$ and using It\^{o}'s lemma, it is well known that, in the Black-Scholes model,
\begin{equation}\label{payoff_variance_swap}
    \mathbb{E}^{\mathbb{Q}} \left[ <S>_T \, - \, <S>_t  \, \vert \, S_t \right]
    \, = \, 
    2 \, 
    \mathbb{E}^{\mathbb{Q}} \left[ -\log\frac{S_T}{S_t} \, + \, \frac{S_T}{S_t} \ - 1  \, \vert \, S_t \right]
    .
\end{equation}
In the more general framework of exponential L\'evy models, the overall multipliers in the r.h.s. of \eqref{payoff_variance_swap} are different from 2 and have been determined in \cite{Carr12}; for instance in the symmetric NIG models, it is equal to $\frac{1}{\alpha(\alpha-\sqrt{\alpha^2-1})}$ which, as expected, tends to $2$ when $\alpha\rightarrow\infty$ . Let us therefore show how to derive pricing formulas for the log options and the log contract in this model: remarking that, using notations \eqref{moneyness},
\begin{equation}
    \mathcal{P}_{log\,call} (S_t e^{(r-q+\omega)\tau + x} , K) \, = \, [k + x]^+ 
    ,
\end{equation}
it follows that the Mellin transform for the payoff function \eqref{sym_payoff_transform} reads, for the log call:
\begin{equation}
    P^{*}(s_1) \, = \, \int\limits_{-k}^{\infty} \, (k+x) \, (x-\mu\tau)^{-s} \, \ud x \, = \, \frac{(-k_0)^{2-s_1}}{(s_1-2)(s_1-1)} 
\end{equation}
and, using proposition \ref{prop:symm_factorization}, that the log call price itself writes:
\begin{equation}
    C_{log} \, = \, \frac{\alpha e^{(\alpha\delta-r)\tau}}{2\pi}
    \,
    \int\limits_{c_1-i\infty}^{c_1+i\infty} \, 
    \frac{\Gamma(\frac{s_1}{2})}{(s_1-2)(s_1-1)} \,
    (-k_0)^{2-s_1} \, 
    \K_{1-\frac{s_1}{2}}(\alpha\delta\tau) \,
    \left( \frac{2\delta\tau}{\alpha} \right)^{\frac{s_1}{2}} \,
    \frac{\ud s_1}{2i\pi}
\end{equation}
where $c_1 > 2$. Similarly, the log put writes:
\begin{equation}
    P_{log} \, = \, \frac{\alpha e^{(\alpha\delta-r)\tau}}{2\pi}
    \,
    \int\limits_{c_1-i\infty}^{c_1+i\infty} \, 
    \frac{\Gamma(\frac{s_1}{2})}{(s_1-2)(s_1-1)} \,
    k_0^{2-s_1} \, 
    \K_{1-\frac{s_1}{2}}(\alpha\delta\tau) \,
    \left( \frac{2\delta\tau}{\alpha} \right)^{\frac{s_1}{2}} \,
    \frac{\ud s_1}{2i\pi}
\end{equation}
where $c_1 > 2$. Summing all residues arising at $s_1 = 2$, $s_1=1$ and $s_1=-2n$, $n\in\mathbb{N}$, grouping the terms and simplifying yields:
\begin{formula}[Log options, log contract]\label{form:log}
    The value at time $t$ of a log option is:
    \begin{itemize}
        \item[-] Log call:
        \begin{equation}\label{sym_log_call}
            C_{log} \, = \, e^{-r\tau} \,
            \left[
            \frac{k_0}{2} \, + \,
            \frac{\alpha e^{\alpha\delta\tau}}{2\pi} \,
            \sum\limits_{n=0}^{\infty} \, 
            \frac{(-1)^{n-1} k_0^{2n}}{n!(2n-1)} \, 
            \K_n (\alpha\delta\tau) \,
            \left( \frac{2\delta\tau}{\alpha}  \right)^{-n+1}
            \right]
            ;
            \end{equation}
        \item[-] Log put:
        \begin{equation}\label{sym_log_put}
            P_{log} \, = \, e^{-r\tau} \,
            \left[
            -\frac{k_0}{2} \, + \,
            \frac{\alpha e^{\alpha\delta\tau}}{2\pi} \,
            \sum\limits_{n=0}^{\infty} \, 
            \frac{(-1)^{n-1} k_0^{2n}}{n!(2n-1)} \, 
            \K_n (\alpha\delta\tau) \,
            \left( \frac{2\delta\tau}{\alpha}  \right)^{-n+1}
            \right] 
            ;
            \end{equation}  
        \item[-] Log contract:
        \begin{equation}\label{sym_log_contract}
            C_{log} \, - \, P_{log} \, = \, 
            e^{-r\tau} \, k_0
            .
        \end{equation}
    \end{itemize}
\end{formula}
Recall that, when $\alpha\rightarrow\infty$, $\omega\sim -\mu-\frac{\sigma^2}{2}$, where $\sigma^2 : = \frac{\delta}{\alpha}$  and therefore the log contract \eqref{sym_log_contract} becomes 
\begin{equation}
    \left( C_{log} \, - \, P_{log}  \right)^{(\alpha\rightarrow\infty)} \, = \, e^{-r\tau} \, 
    \left( \log\frac{S_t}{K} \, + \, (r-q-\frac{\sigma^2}{2} ) \tau
    \right)
\end{equation}
which, taking $K=1$, is the formula originally obtained by \cite{Neuberger94} for the price of a log contract in the Black-Scholes model.

\paragraph{Capped payoffs} Suppose that we wish introduce a cap to limit the exercise range of a digital option for example; in this case, the payoff of the cash-or-nothing call would read:
\begin{equation}\label{payoff_c/n_capped}
    \mathcal{P}_{capped \, c/n} (S_T,K_-,K_+) \, := \, \mathbbm{1}_{ \{ K_- < S_T < K_+  \} }
\end{equation}
where $K_-$ is the strike price, and $K_+$ the cap. It is clear that \eqref{payoff_c/n_capped} can be decomposed into the difference of two cash-or nothing calls with strike prices $K_-$ and $K_+$. Therefore, introducing the notations 
\begin{equation}
     k_\pm \, := \, \log\frac{S_t}{K_\pm} + (r - q + \omega) \tau 
     \, , \hspace{0.5cm} k_{0,\pm} \, := \, k_\pm  +  \mu\tau
\end{equation}
then it follows immediately from formula \ref{formula:sym_c/n} that the value at time t of the capped cash-or-nothing call is given by
 \begin{equation}\label{sym_cap_c/n}
        C_{capped \, c/n} \, = \, \frac{  \alpha e^{(\alpha \delta - r) \tau}}{\sqrt{\pi}} \, 
        \sum\limits_{n = 0}^{\infty} \,
        \frac{ k_{0,-}^n - k_{0,+}^n  }{n! \Gamma( 1-\frac{n}{2} )} \,
        \K_{\frac{n+1}{2}}(\alpha\delta\tau)
        \left(  \frac{\delta\tau}{2\alpha} \right)^{\frac{-n+1}{2}}
        .
\end{equation} 
Of course, for \eqref{sym_cap_c/n} to converge, one needs assumption \eqref{ass:growth} to be satisfied for both $k_{0,-}$ and $k_{0,+}$. Extension to the case of an option activated outside the interval $[K_-,K_+]$ is straightforward, by writing down:
\begin{equation}
    \mathbb{E}^{\mathbb{Q}} \left[\mathbbm{1}_{ \{S_T < K_-\} \cup \{S_T > K_+\}  } \, \vert \,  S_t \right]
    \,= \,
    1 \, - \, \mathbb{E}^{\mathbb{Q}} \left[ \mathbbm{1}_{ \{ K_- < S_T < K_+  \} } \, \vert \,  S_t \right]
\end{equation}
and by using \eqref{sym_cap_c/n}.

\section{Option pricing in the asymmetric model}\label{sec:asymmetric}

Let us now consider the case where the process $\{ X_t \}_{t\in[0,T]}$ in \eqref{Exp_NIG_SDE} is distributed according to the asymmetric distribution $\mathrm{NIG}(\alpha, \beta , \delta t, \mu t)$, $\beta\neq 0$. All notations defined in \eqref{moneyness} remain valid, but we introduce the supplementary definition $\gamma \, := \, \sqrt{\alpha^2 - \beta^2}$, such that $k_0$ can be written down as:
\begin{equation}
    k_0 \, = \, \log\frac{S_t}{K} \, + \, 
    \left(r - q + \delta \left( \sqrt{\alpha^2 - (\beta+1)^2} - \gamma \right) \right) \tau 
    .
\end{equation}
To simplify the notations, as multiple $\mathbb{C}$-integrals will be involved, we will denote the vectors in $\mathbb{C}^n$ by $\underline{z}:=^t [z_1 , \dots , z_n]$, $z_i\in\mathbb{C}$ for $i= 1\dots n$, and we will use the notation 
\begin{equation}
    \underline{c} \, + \, i\mathbb{R}^n \, := \, (c_1 + i\mathbb{R} ) \, \times \, (c_2 + i\mathbb{R} ) \, \, \dots \, \times \, (c_n + i\mathbb{R} )
    .
\end{equation}

\subsection{Pricing formula}\label{subsec:asym_pricing}

Like in section \ref{sec:symmetric}, we start by establishing a representation for the NIG density $f(x,t)$ under the form of a Mellin-Barnes integal, but this time in the asymmetric case.

\begin{lemma}\label{lemma:asym_density}
    For any $\underline{c} \in \mathbb{R}_+^2$, the following holds true:
    \begin{multline}\label{asym_density}
        f(x,t) \, = \, \frac{\alpha}{2\pi} \, e^{\gamma \delta t} \, 
        \\
        \times \int\limits_{\underline{c} + i\mathbb{R}^2} \, 
        (-1)^{-s_2}\beta^{-s_2} \, 
        \Gamma\left( \frac{s_1}{2} \right) \,
        \Gamma(s_2) \, 
        \K_{1-\frac{s_1}{2}}(\alpha\delta t) \,
        \left( \frac{2\delta t}{\alpha} \right)^{\frac{s_1}{2}} \,
        \vert x - \mu t\vert ^{-s_1} \,
        (x-\mu t)^{-s_2}
        \frac{\ud s_1 \ud s_2}{(2i\pi)^2}
        .
    \end{multline}
\end{lemma}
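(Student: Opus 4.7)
The plan is to generalize the proof of Lemma \ref{lemma:sym_density} by introducing a second Mellin--Barnes representation that accounts for the skewness factor $e^{\beta(x-\mu t)}$ appearing in the density \eqref{NIG_process_density} when $\beta\neq 0$. Setting $\gamma = \sqrt{\alpha^2-\beta^2}$, I would first rewrite the NIG density in the fully factored form
\begin{equation*}
    f(x,t) \, = \, \frac{\alpha\delta t}{\pi} \, e^{\gamma \delta t} \, \cdot \, e^{\beta (x-\mu t)} \, \cdot \, \frac{\K_1\!\bigl(\alpha\sqrt{(\delta t)^2 + (x-\mu t)^2}\bigr)}{\sqrt{(\delta t)^2+(x-\mu t)^2}}.
\end{equation*}
The last factor is precisely the one handled in the symmetric proof via \eqref{BesselK_Mellin}; substituting that identity immediately introduces the $s_1$-integral and contributes $\Gamma(s_1/2)\,\K_{1-s_1/2}(\alpha\delta t)(2\delta t/\alpha)^{s_1/2}|x-\mu t|^{-s_1}$ together with a $1/(2\delta t)$ prefactor.

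Next, I would apply to $e^{\beta(x-\mu t)}$ the Mellin--Barnes representation for the exponential already used in \eqref{MB_payoff_a/n}, namely
\begin{equation*}
    e^{\beta(x-\mu t)} \, = \, \int_{c_2-i\infty}^{c_2+i\infty} (-1)^{-s_2} \, \Gamma(s_2) \, \bigl(\beta(x-\mu t)\bigr)^{-s_2} \, \frac{\ud s_2}{2i\pi}, \qquad c_2>0,
\end{equation*}
and split $\bigl(\beta(x-\mu t)\bigr)^{-s_2} = \beta^{-s_2}(x-\mu t)^{-s_2}$. This introduces the second Mellin variable $s_2$ together with exactly the factor $(-1)^{-s_2}\beta^{-s_2}\Gamma(s_2)(x-\mu t)^{-s_2}$ that appears in the statement \eqref{asym_density}. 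Combining the two representations and absorbing $\delta t/\pi$ (from the density) together with the $1/(2\delta t)$ (from \eqref{BesselK_Mellin}) into the global prefactor $\alpha/(2\pi)$ yields \eqref{asym_density} verbatim.

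The only mild subtlety is the branch-cut issue for $\bigl(\beta(x-\mu t)\bigr)^{-s_2}$: the exponential Mellin--Barnes identity requires $\beta(x-\mu t)$ to have positive real part, so on the complementary region the factor $(x-\mu t)^{-s_2}$ must be interpreted via analytic continuation, exactly as in the symmetric treatment \eqref{MB_payoff_a/n}. This is not a genuine obstacle here because the density will only be used under an $x$-integration in the subsequent pricing identity (analogous to proposition \ref{prop:symm_factorization}), where the contour vector $\underline{c}\in\mathbb{R}_+^2$ can be selected so as to ensure absolute convergence of the resulting double Mellin--Barnes integral. I therefore expect the entire proof to fit in a few lines, with the only step requiring care being the verification that $\underline{c}\in\mathbb{R}_+^2$ lies in the common strip of validity of both Mellin transforms invoked.
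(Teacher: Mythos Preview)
Your proposal is correct and follows essentially the same route as the paper: the paper's proof simply inserts the Bessel Mellin representation \eqref{BesselK_Mellin} for the $\K_1/\sqrt{\cdot}$ factor and the Cahen--Mellin representation for $e^{\beta(x-\mu t)}$ into \eqref{NIG_process_density}, exactly as you outline. Your remark on the branch-cut/convergence-strip issue for the exponential representation is even slightly more careful than the paper, which glosses over this point.
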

\begin{proof}
    Like in the proof of lemma \ref{lemma:sym_density}, we introduce the Mellin representation \eqref{BesselK_Mellin} for the Bessel function that holds for $c_1\in\mathbb{R}$, and we introduce a supplementary representation for the exponential term (see table \ref{tab:Mellin} in appendix \ref{app:Mellin}): 
    \begin{equation}\label{asym_exp}
        e^{\beta(x-\mu t)} \, = \, 
        \int\limits_{c_2-i\infty}^{c_2+i\infty} \, (-1)^{-s_2} \, \beta^{s_2} \, \Gamma(s_2) \, (x - \mu t)^{-s_2} \, \frac{\ud s_2}{2i\pi}
    \end{equation}
    that holds for $c_2\in\mathbb{R}_+$. Inserting \eqref{BesselK_Mellin} and \eqref{asym_exp} into the density \eqref{NIG_process_density} yields the reprensentation \eqref{asym_density}.
\end{proof}

Let us now introduce the asymmetric analogue to the $P^*(s_1)$ function \eqref{sym_payoff_transform}:
\begin{equation}\label{asym_payoff_transform}
    P^*(s_1,s_2) \, = \, \int\limits_{-\infty}^{\infty} \, \mathcal{P} \left( S_t e^{(r - q+\omega)\tau +x} , K_1, \dots , K_n \right) \, |x - \mu\tau|^{-s_1} \, (x-\mu\tau)^{-s_2} \, \ud x
\end{equation}
and assume that it exists for $( Re(s_1), Re(s_2)) \in P$ for a certain subset $P\subset\mathbb{R}^2$. Then, as a consequence of the risk-neutral pricing formula \eqref{Risk-neutral_2} and of lemma \ref{lemma:asym_density}, we immediately obtain:

\begin{proposition}[Factorization in the Mellin space]
    \label{prop:asymm_factorization}
    Let $\underline{c}\in \tilde{P}$ where $\tilde{P} := P \cap \mathbb{R}_+^2$ is assumed to be nonempty. Then the value at time $t$ of a contingent claim delivering a payoff $\mathcal{P}(S_T,K_1, \dots , K_n)$ at its maturity $t=T$ is equal to:
    \begin{equation}\label{asym_factorization}
        \mathcal{C}  \, = \, 
        \frac{\alpha}{2\pi} \, e^{(\gamma \delta - r) \tau} \,
        \int\limits_{\underline{c} + i\mathbb{R}^2} \, 
        (-1)^{-s_2}\beta^{-s_2}
        \Gamma\left( \frac{s_1}{2} \right) \,
        \Gamma(s_2) \,
        P^*(s_1,s_2) \,
        \K_{1-\frac{s_1}{2}}(\alpha\delta \tau) \,
        \left( \frac{2\delta \tau}{\alpha} \right)^{\frac{s_1}{2}}  \, \frac{\ud s_1 \ud s_2}{(2i\pi)^2}
        .
    \end{equation}
\end{proposition}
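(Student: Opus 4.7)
The plan is a direct substitution argument: insert the Mellin--Barnes representation of the NIG density established in Lemma \ref{lemma:asym_density} into the risk-neutral pricing integral \eqref{Risk-neutral_2}, then interchange the order of integration (Fubini) so that the inner $x$-integral collapses to $P^*(s_1,s_2)$ by its defining formula \eqref{asym_payoff_transform}. This is the exact analogue of Proposition \ref{prop:symm_factorization}, lifted from one to two Mellin variables.

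Concretely, I would first write
\begin{equation*}
    \mathcal{C} \, = \, e^{-r\tau}\int_{-\infty}^{+\infty} \mathcal{P}\!\left(S_t e^{(r-q+\omega)\tau+x},K_1,\dots,K_n\right) f(x,\tau)\,\ud x
\end{equation*}
and replace $f(x,\tau)$ by its representation \eqref{asym_density} with $t$ set to $\tau$. This produces a triple integral over $x\in\mathbb{R}$ and $(s_1,s_2)\in \underline{c}+i\mathbb{R}^2$. All factors depending on $(s_1,s_2)$ but not on $x$, namely $(-1)^{-s_2}\beta^{-s_2}\Gamma(s_1/2)\Gamma(s_2)\K_{1-s_1/2}(\alpha\delta\tau)(2\delta\tau/\alpha)^{s_1/2}$, pull out of the $x$-integral, while the $x$-dependent factor inside the payoff integral becomes $|x-\mu\tau|^{-s_1}(x-\mu\tau)^{-s_2}$.

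Next I would swap the order of integration. The admissibility of this swap is guaranteed by the hypothesis $\underline{c}\in\tilde{P} = P\cap\mathbb{R}_+^2\neq\emptyset$: by definition of $P$, along such a contour the $x$-integral $\int_\mathbb{R}\mathcal{P}(\cdots)|x-\mu\tau|^{-s_1}(x-\mu\tau)^{-s_2}\ud x$ converges absolutely and equals $P^*(s_1,s_2)$; combined with the standard Stirling decay of $\Gamma(s_1/2)\Gamma(s_2)$ along vertical strips (and the boundedness of the Bessel factor for fixed real part of $s_1$), the full integrand is absolutely integrable over $\mathbb{R}\times(\underline{c}+i\mathbb{R}^2)$, so Fubini applies. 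After the swap, the inner $x$-integral is precisely $P^*(s_1,s_2)$, and one reads off the announced formula \eqref{asym_factorization} after collecting the prefactor $\frac{\alpha}{2\pi}e^{(\gamma\delta - r)\tau}$.

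The only genuine technical point is the Fubini justification, and it is essentially routine once $\tilde P$ is nonempty; I expect it to be dispatched in one line, exactly as in the proof of Proposition \ref{prop:symm_factorization}. The rest of the argument is bookkeeping of the factors produced by Lemma \ref{lemma:asym_density}.
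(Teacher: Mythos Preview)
Your proposal is correct and follows exactly the paper's approach: the paper simply states that \eqref{asym_factorization} is an immediate consequence of the risk-neutral pricing formula \eqref{Risk-neutral_2} and Lemma \ref{lemma:asym_density}, without further elaboration. Your write-up in fact supplies more detail (the Fubini justification) than the paper itself, which treats the result as evident from the definitions.
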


\subsection{Digital and European options}\label{subsec:asym_dig_eur}

To illustrate some applications of proposition \ref{prop:asymm_factorization}, we compute the price of the digital and European options, whose payoffs were defined in subsection \ref{subsec:sym_dig_eur}. We also recall the notation for the Pochhammer symbol $(a)_n:=\frac{\Gamma(a+n)}{\Gamma(a)}$.

\begin{formula}[Asset-or-nothing call] 
    \label{formula:asym_A/N}
    The value at time $t$ of an asset-or-nothing call option is: \\
    \begin{equation}\label{asym_A/N}
        C_{a/n} \, = \, \frac{ K\alpha e^{(\gamma \delta - r) \tau}}{\sqrt{\pi}} \, 
        \sum\limits_{n_1, n_2, n_3 = 0}^{\infty} \,
        \frac{(-n_1+n_3+1)_{n_2} \, k_0^{n_1} \beta^{n_2}}{n_1! n_2! \Gamma(1 + \frac{-n_1+n_2+n_3}{2})} \,
        \K_{\frac{n_1-n_2-n_3+1}{2}}(\alpha\delta\tau) \,
        \left(  \frac{\delta\tau}{2\alpha} \right)^{\frac{-n_1+n_2+n_3+1}{2}}
        .
    \end{equation} 
\end{formula}
\begin{proof}
    \underline{Step 1:} The proof starts like the proof of formula \ref{formula:sym_A/N}, by assuming $k_0 < 0$, by remarking that
    \begin{equation}
        \mathcal{P}_{a/n} \left( S_t e^{(r-q+\omega)\tau +x} , K \right)
        \, = \,
        K \, e^{k+x} \, \mathbbm{1}_{ \{ x > - k \} }
        .
    \end{equation}
    and by introducing a Mellin-Barnes representation for the exponential term in the option's payoff:
        \begin{equation}\label{MB_payoff_a/n_s3}
        e^{k +x} \, = \, \int\limits_{c_3 - i\infty}^{c_3 + i\infty} (-1)^{-s_
        3} \Gamma(s_3) (k  + x)^{-s_3} \, \frac{\ud s_3}{2i\pi} \hspace{1cm} (c_3 > 0)
        .
        \end{equation}
    Therefore, the $P^*(s_1,s_2)$ function \eqref{asym_payoff_transform} reads:
    \begin{align}\label{asym_A/N_payoff_Mellin}
        P^*(s_1,s_2) & = \, K \int\limits_{c_3 - i\infty}^{c_3 + i\infty} (-1)^{-s_3} \Gamma(s_3) 
        \int\limits_{ -k }^\infty (k + x)^{-s_3} (x - \mu\tau)^{-s_1-s_2} \, \ud x \, \frac{\ud s_3}{2i\pi} \\
        & = K \int\limits_{c_3 - i\infty}^{c_3 + i\infty} (-1)^{-s_2} \frac{\Gamma(s_3)\Gamma(1-s_3)\Gamma(s_1+s_2+s_3-1)}{\Gamma(s_1+s_2)} (-k_0)^{-s_1-s_2-s_3+1}   
        \, \frac{\ud s_3}{2i\pi}
    \end{align}
    where the $x$-integral exists because $k_0<0$. Using proposition \ref{prop:asymm_factorization}, we obtain the price of the asset-or-nothing call:
    \begin{multline}
        \mathcal{C}  \, = \, 
        \frac{K \alpha}{2\pi} \, e^{(\gamma \delta - r) \tau} 
        \int\limits_{\underline{c} + i\mathbb{R}^3} \, 
        (-1)^{-s_2-s_3}\beta^{-s_2} \,
        \frac{\Gamma(\frac{s_1}{2})\Gamma(s_2)\Gamma(s_3)\Gamma(1-s_3)\Gamma(s_1+s_2+s_3-1)}{\Gamma(s_1+s_2)} \, 
        (-k_0)^{-s_1-s_2-s_3+1}
        \\
        \times \K_{1-\frac{s_1}{2}}(\alpha\delta\tau) \,
        \left( \frac{2\delta \tau}{\alpha} \right)^{\frac{s_1}{2}}  \, \frac{\ud s_1 \ud s_2 \ud s_3}{(2i\pi)^3}
    \end{multline}
    which converges in the subset $\{ (s_1,s_2,s_3)\in\mathbb{C}^3, Re(s_1)>0, Re(s_2)>0, 0 < Res(s_3) < 1, Re(s_1+s_2+s_3)>1  \}$ and can be analytically continued outside this polyhedron, except when the Gamma functions in the numerator are singular.  If we consider the singularities induced by $\Gamma(s_2)$ at $s_2 = -n_2$, $n_2\in\mathbb{N}$, by $\Gamma(s_3)$ at $s_3 = -n_3$, $n_3\in\mathbb{N}$ and by $\Gamma(s_1+s_2+s_3-1)$ at $s_1+s_2+s_2-1 = -n_1$, $n_1\in\mathbb{N}$, then, the associated residues are straightforward to compute via the change of variables $u:=s_1+s_2+s-3-1$, $v:=s_2$, $w=s_3$ and via the singular behavior \eqref{sing_Gamma} for the Gamma functions; they read:
    \begin{multline}\label{asym_A/N_res}
        \frac{ K\alpha e^{(\gamma \delta - r) \tau}}{2\pi}
        (-1)^{n_2+n_3}
        \beta^{n_2}
        \frac{(-1)^{n_1}}{n_1!} \frac{(-1)^{n_2}}{n_2!} \frac{(-1)^{n_3}}{n_3!}
        \frac{\Gamma(1+n_3)\Gamma(\frac{-n_1+n-2+n_3+1}{2})}{\Gamma(-n_1+n_3+1)}
        (-k_0)^{n_1}
        \\
        \times \K_{1-\frac{-n_1+n_2+n_3+1}{2}}(\alpha\delta\tau)
        \left( \frac{\delta\tau}{2\alpha} \right)^{\frac{-n_1+n_2+n_3+1}{2}}
        .
    \end{multline}
    Using the Legendre duplication formula \eqref{Legendre} and the definition of the Pochhammer symbol \eqref{Pochhammer_def}, we write:
    \begin{equation}
        \frac{\Gamma(\frac{-n_1+n_2+n_3+1}{2})}{\Gamma(-n_1+n_3+1)}
        \, = \, 
        \frac{\sqrt{\pi}}{2^{-n_1+n_2+n_3}}  \, \frac{(-n_1+n_3+1)_{n_2}}{\Gamma(1+\frac{-n_1+n_2+n_3}{2})}
        .
    \end{equation}
    Inserting into \eqref{asym_A/N_res}, simplifying and summing all residues for $n_1,n_2,n_3\in\mathbb{N}$ yields the series \eqref{asym_A/N}.
    
    \noindent \underline{Step 2:} Like in the proof of formula \ref{formula:sym_A/N}, extension to the case $k_0>0$ is performed thanks to the parity
        \begin{equation}
        \mathbb{E}^\mathbb{Q} [ S_T \, \mathbbm{1}_{ \{ S_T > K \} }  \, \vert \, S_t ]
        \, = \, 
        S_t \, e^{(r-q)\tau} \, - \, 
        \mathbb{E}^\mathbb{Q} [ S_T \, \mathbbm{1}_{ \{ S_T < K \} }  \, \vert \, S_t ]
        .
    \end{equation}
    
    \noindent\underline{Step 3:} Last, using the same estimates than in the proof of formula \ref{formula:sym_A/N}, the series \eqref{asym_A/N_res} converges when $n_2,n_3\rightarrow\infty$ for all parameter values, and when $n_1\rightarrow\infty$ if and only if assumption \ref{ass:growth} is satisfied.
\end{proof}

\begin{formula}[European call] 
    \label{formula:asym_eur}
    The value at time $t$ of a European call option is: \\
    \begin{equation}\label{asym_eur}
        C_{eur} \, = \, \frac{ K\alpha e^{(\gamma \delta - r) \tau}}{\sqrt{\pi}} \, 
        \sum\limits_{\substack{n_1, n_2 =0 \\ n_3 = 1}}^{\infty} \,
        \frac{(-n_1+n_3+1)_{n_2} \, k_0^{n_1} \beta^{n_2}}{n_1! n_2! \Gamma(1 + \frac{-n_1+n_2+n_3}{2})} \,
        \K_{\frac{n_1-n_2-n_3+1}{2}}(\alpha\delta\tau) \,
        \left(  \frac{\delta\tau}{2\alpha} \right)^{\frac{-n_1+n_2+n_3+1}{2}}
        .
    \end{equation} 
\end{formula}
\begin{proof}
Like in the proof of formula \ref{formula:sym_eur}, we remark that we can write:
    \begin{equation}
        \mathcal{P}_{eur} (Se^{(r - q +\omega)\tau+x},K) 
        \, = \,
        K (e^{k + x} -1) \mathbbm{1}_{ \{x > -k \} }
        .
    \end{equation}
    Then, we use the Mellin-Barnes representation (see table \ref{tab:Mellin} in appendix \ref{app:Mellin}):
    \begin{equation}\label{MB_payoff_european_s3}
        e^{k + x} - 1 \, = \, \int\limits_{c_3 - i\infty}^{c_3 + i\infty} (-1)^{-s_
        3} \Gamma(s_3) (k  + x)^{-s_3} \, \frac{\ud s_3}{2i\pi} \hspace{1cm} (-1 < c_3 < 0)
    \end{equation}
    and we proceed exactly the same way than for proving Formula \ref{formula:asym_A/N}; the $n_3$-summation in \eqref{asym_eur} starts in $n_3=1$ instead of $n_3=0$, because the strip of convergence of \eqref{MB_payoff_european_s3} is reduced to $<-1,0>$ instead of $<0,\infty>$ in \eqref{MB_payoff_a/n_s3}.
\end{proof}
By difference of \eqref{asym_A/N} and \eqref{asym_eur}, we immediately obtain the formula for the cash-or-nothing call:
\begin{formula}[Cash-or-nothing call] 
    \label{formula:asym_c/n}
    The value at time $t$ of a cash-or-nothing call option is: \\
    \begin{equation}\label{asym_c/n}
        C_{c/n} \, = \, \frac{ \alpha e^{(\gamma \delta - r) \tau}}{\sqrt{\pi}} \, 
        \sum\limits_{n_1, n_2 =0}^{\infty} \,
        \frac{(-n_1+1)_{n_2} \, k_0^{n_1} \beta^{n_2}}{n_1! n_2! \Gamma(1 + \frac{-n_1+n_2}{2})} \,
        \K_{\frac{n_1-n_2+1}{2}}(\alpha\delta\tau) \,
        \left(  \frac{\delta\tau}{2\alpha} \right)^{\frac{-n_1+n_2+1}{2}}
        .
    \end{equation} 
\end{formula}

\section{Applications and numerical tests}\label{sec:num}

In this section, we show how to implement very simply the pricing formulas we derived, for instance in an Excel spreadsheet. we also determine what restriction is induced by assumption \ref{ass:growth} in terms of accessible option maturities, and we provide some precise estimates for the convergence speed and the truncation errors of the series. Then, we compare the various pricing formulas established in the above with several numerical tools, and demonstrate the reliability and efficiency of the results.

\subsection{Practical implementation}\label{subsec:practical}

Let us show on some examples how our pricing formulas can be used in practice. In figure \ref{fig:Excel}, we have implemented the formulas \ref{formula:sym_A/N}, \ref{formula:sym_eur} and \ref{formula:sym_c/n} in an Excel spreadsheet, up to $n_1=n_2=10$. This can be done in a straightforward way, thanks to the functions BESSELK(x,n) and GAMMA(x). The red square yields the price of the asset-or-nothing call $C_{a/n}$, and the blue rectangle starting at $n_2=1$ yields the price of the European call $C_{eur}$; the remaining green rectangle for $n_2=0$ represents the price of the cash-or-nothing call multiplied by the strike price, i.e. $K\times C_{c/n}$. 

\begin{figure}[H]
\centering
\includegraphics[scale=0.6]{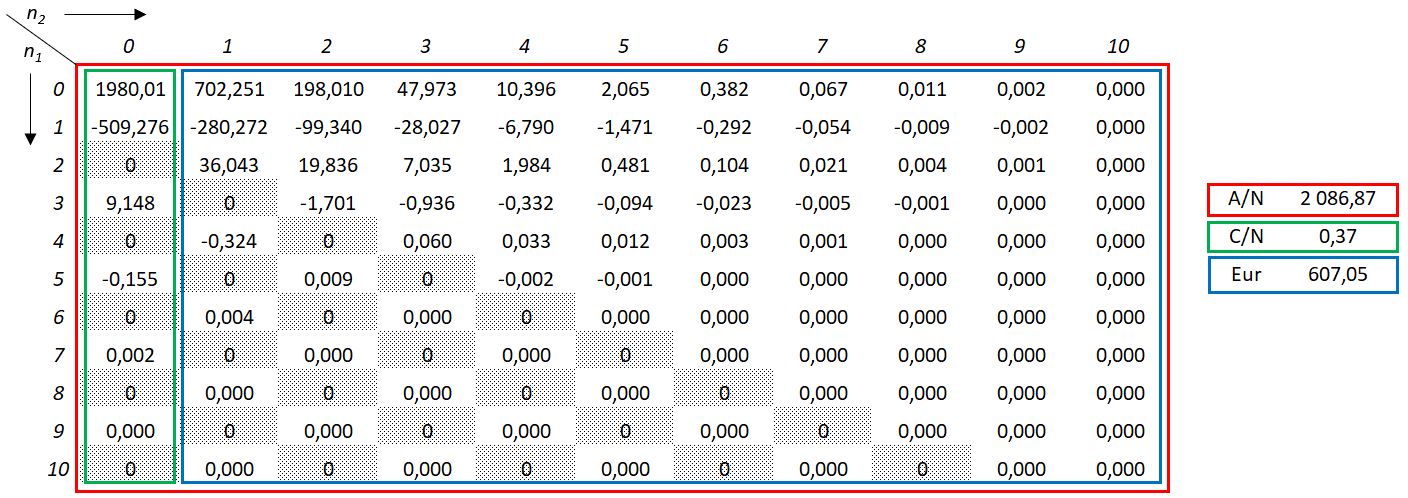}
\caption{Series terms for the formulas \ref{formula:sym_A/N} (red), \ref{formula:sym_eur} (blue) and \ref{formula:sym_c/n} (green). Choice of parameters: $S_t$=3800, $K=4000$, $r = 1\%$, $q = 0\%$, $\tau = 1$, $\alpha=10$, $\delta = 2$.}
\label{fig:Excel}
\end{figure}

We may first note that the convergence is extremely fast; moreover, computations are even more accelerated by two properties:
\begin{itemize}
    \item[-] The symmetry property of the Bessel function  \eqref{Bessel_sym}, $ \K_{\nu} (z) \, = \, \K_{-\nu} (z)$, which reduces by two the number of Bessel functions to evaluate (but these evaluations are straightforward anyway);
    \item[-] The presence of several null series terms (in grey in fig. \ref{fig:Excel}), due to the divergence of the Gamma function in the denominator when its argument is a negative integer. It is not complicated to see that, if e.g. $n_1$ is an even integer $n_1=2p$, then there are $2\times\sum\limits_{k=1}^{p-1}=p(p-1)$ null terms. In fig. \ref{fig:Excel}, we choose $n_1=10$, corresponding to $11\times 10=110$ terms in the computation of the European call price, and an attained precision of $10^{-2}$; but as there are $5\times 4=20$ null terms in the series, only $90$ terms are actually needed to attain this precision.
 \end{itemize}

\subsection{Accessible range of parameters}


We start by remarking that the at the money (ATM) situation ($S_t=K$) is a favorable situation for satisfying assumption \ref{ass:growth}. Indeed, in that case, we have:
\begin{equation}\label{rho_ATMF}
    \frac{\vert k_0 \vert}{\delta\tau} \, = \, \left\vert  \frac{r-q}{\delta}  \, + \, \sqrt{\alpha^2-(\beta+1)^2} - \sqrt{\alpha^2 \, - \, \beta^2} \right\vert
    .
\end{equation}
In the symmetric model in particular, it is clear that
\begin{equation}
    -1 + \frac{r-q}{\delta} \, < \,    \frac{r-q}{\delta}  \, + \, \sqrt{\alpha^2-1^2} - \alpha  \, < \, \frac{r-q}{\delta} 
\end{equation}
and therefore assumption \ref{ass:growth} is satisfied as soon as $r-q < \delta$; according to the implied parameters in table \ref{tab:accessible}, the smallest calibrated value for $\delta$ is $0.2483$, therefore assumption \ref{ass:growth} is satisfied (independently of $\alpha$ and of other market parameters) as soon as the risk-free interest rate is smaller than $25\%$, which is of course the case for most financial applications.

In the more general non at the money and non symmetric case, satisfying assumption \ref{ass:growth} necessitates some restriction on the option's maturities, depending on the moneyness situation. Assuming that $\mu=0$ (as option prices are not sensitive to $\mu$) and, introducing
\begin{equation}
    \rho_{\pm} \, := \, \frac{\log\frac{S_t}{K}}{\pm \delta - r + q - \omega}
    \, ,
\end{equation}
then it is not hard to see that:
\begin{itemize}
    \item[-] If $S_t>K$ (in the money (ITM) situation), then assumption \ref{ass:growth} is satisfied if $\tau > \rho_+$ or $\tau < \rho_-$;

    \item[-] If $S_t<K$ (out of the money (OTM) situation), then assumption \ref{ass:growth} is satisfied if $\tau > \rho_-$ or $\tau < \rho_+$.
\end{itemize}
In table \ref{tab:accessible}, we illustrate this rule on several implied NIG parameters, calibrated in the literature on various option markets: OBX options in \cite{Saebo09}, S\&P 500 options in \cite{Matsuda06,Albrecher05} or Euro Stoxx 50 (SX5E) options in \cite{Schoutens04}.

\begin{table}[ht]
 \caption{Maturities allowing that assumption \ref{ass:growth} is satisfied, for some sets of implied NIG parameters. Other parameters: $K=4000$, $r =1\%$, $q=0\%$ and $S_t=3500$ (OTM) or $S_t=4500$ (ITM).}
 \label{tab:accessible}       
 \centering
 \begin{tabular}{lccccc}
 \hline
 & \multicolumn{3}{c}{ NIG parameters  } &  \multicolumn{2}{c}{ Accessible maturities  }  \\
 & $\alpha$ & $\beta$ & $\delta$ & OTM & ITM \\
 \hline
 \cite{Saebo09} & 8.9932 & -4.5176 & 1.1528 & $\tau > 0.077$ & $\tau > 0.208$ \\
 \cite{Matsuda06} & 20.7408 & -11.7308 & 0.2483 & $\tau > 0.319$ & $\tau > 1.504$
 \\
 \cite{Schoutens04} & 16.1975 & -3.1804 & 1.0867 & $\tau > 0.104$ & $\tau > 0.131$
   \\
 \cite{Albrecher05} & 18.4815 & -4.8412 & 0.4685 & $\tau > 0.226$ & $\tau > 0.341$
 \\
 \hline
\end{tabular}
\end{table}

\subsection{Truncation error}

In this subsection we estimate the rest of some series arising in our pricing formulas, in order to determine what truncation has to be applied to obtain a desired level of precision in option prices. For simplicity of notations, we perform the analysis in the symmetric model, but extension to the asymmetric case is straightforward.

\paragraph{Cash-or-nothing} Let us observe that the general term of the cash-or-nothing series \eqref{sym_c/n_p} is the same than the $R_{2p+1}$ term introduced in \eqref{R_p} in the proof of formula \ref{formula:sym_A/N}:
\begin{equation}
    R_{2p+1} \, := \,  \frac{1}{\sqrt{\pi}}\frac{1}{2p+1}\frac{(-1)^p}{4^p p!} \, k_0^{2p+1}K_{p+1}(\alpha\delta\tau)\left(  \frac{\delta\tau}{2\alpha} \right)^{-p}  
        .
\end{equation}
Using the bound \eqref{R_p_infty}, we therefore know that, for $\epsilon>0$, there exists a rank $p_\epsilon$ such that the general term of the series in the cash-or-nothing formula \eqref{sym_c/n_p} is bounded by 
\begin{equation}\label{R_p_epsilon}
     |R_{2 p_\epsilon +1}| \, \sim \,  
     \left\vert \frac{1}{\sqrt{2\pi p_\epsilon (2p_\epsilon +2)}} \, \frac{k_0}{e\alpha\delta\tau} \, \left( \frac{k_0^2}{(\delta\tau)^2} \right)^{p_\epsilon} \right\vert
     \, < \, \epsilon
     .
\end{equation}
As a consequence of assumption \ref{ass:growth}, $| \frac{k_0}{e\delta\tau} | < 1$ and therefore, denoting by $\ceil{X}$ the least integer greater or equal to a real number $X$, it suffices to choose
\begin{equation}\label{p_epsilon}
    p_\epsilon \, = \, \ceil*{ \frac{\log\alpha\epsilon}{2 \log \vert \frac{k_0}{\delta\tau} \vert }  }
\end{equation}
to be sure that all terms of order $p \geq p_\epsilon$ are $O(\epsilon)$ in the series \eqref{sym_c/n_p}. Turning back to the $n$-variable (i.e. $n=2p+1$), it follows from \eqref{p_epsilon} that, definying
\begin{equation}\label{n_epsilon}
    n_\epsilon \, := \,   2 p_\epsilon \, + \, 1 
    ,
\end{equation}
then all terms of order $n \geq n_\epsilon $ are $O(\epsilon)$ in the series of formula \ref{formula:sym_c/n}, and that the error in the option price itself is bounded by
\begin{equation}
    \frac{ \alpha e^{(\alpha\delta - r)\tau}}{\sqrt{\pi}} \, \epsilon
\end{equation}
after the computation of $n_\epsilon + 1$ terms.

\paragraph{Asset-or-nothing} Recall the notations introduced in the proof of formula \ref{formula:sym_A/N} for the general term of the series:
\begin{equation}
        R_{n_1,n_2} \, := \,   \frac{ k_0^{n_1} }{n_1! \Gamma( 1+\frac{-n_1+n_2}{2} )} \,
        \K_{\frac{n_1-n_2+1}{2}}(\alpha\delta\tau)
        \left(  \frac{\delta\tau}{2\alpha} \right)^{\frac{-n_1+n_2+1}{2}}    
\end{equation}
and for the terms on the line $n_2=0$:
\begin{equation}
        R_{n_1} \, := \, \frac{ k_0^{n_1} }{n_1! \Gamma( 1 - \frac{n_1}{2} )} \,
        \K_{\frac{n_1+1}{2}}(\alpha\delta\tau)
        \left(  \frac{\delta\tau}{2\alpha} \right)^{\frac{-n_1+1}{2}}  
        .
\end{equation}  
Let us fix $n_1\in\mathbb{N}$ and consider
\begin{equation}\label{R_n1_n2_ratio}
    \left\vert \frac{R_{n_1,n_2+1}}{R_{n_1,n_2}} \right\vert
    \, = \, 
    \left\vert  \frac{\Gamma(1+\frac{-n_1+n_2}{2})}{\Gamma(1+\frac{-n_1+n_2+1}{2})}  \right\vert
    \,
    \left\vert \frac{ \K_{\frac{n_1-n_2}{2} }(\alpha\delta\tau)}{\K_{\frac{n_1-n_2+1}{2}}(\alpha\delta\tau)}   \right\vert
    \, 
    \sqrt{\frac{\delta\tau}{2\alpha}}
    .
\end{equation}
From the particular values of the Gamma functions \eqref{gamma_half_integers}, the ratio of Gamma functions in \eqref{R_n1_n2_ratio} is smaller or equal to $\sqrt{\pi}$, and the ratio of Bessel functions is smaller than 1, as a consequence of the symmetry and monotonicity relations \eqref{Bessel_sym} and \eqref{Bessel_monotonous}. Hence,
\begin{equation}
    \left\vert \frac{R_{n_1,n_2+1}}{R_{n_1,n_2}} \right\vert
    \, < \, \sqrt{\frac{\pi\delta\tau}{2\alpha}}
\end{equation}
and, consequently, $ \vert R_{n_1,n_2}  \vert < \vert R_{n_1,0} \vert = \vert R_{n_1} \vert $ for any $n_2$ in $\mathbb{N}$ as soon as
\begin{equation}\label{tau_bound}
    \tau \, < \, \frac{2\alpha}{\pi\delta}
    .
\end{equation}
Under this condition, all $R_{n_1,n_2}$ terms are therefore $O(\epsilon)$ as soon as $n_1,n_2 \geq n_\epsilon$ where $n_\epsilon$ is the one determined in \eqref{n_epsilon}, and, consequently, the error in the option price given formula \ref{formula:sym_A/N} is bounded by
\begin{equation}
    \frac{ K \alpha e^{(\alpha\delta - r)\tau}}{\sqrt{\pi}} \, \epsilon
\end{equation}
after the computation of $(n_\epsilon + 1)^2$ terms. Note that if \eqref{tau_bound} is not satisfied, the series still converges but the maximum is not attained on the line $n_2=0$, which complicates the estimation of the number of terms to compute. We may nevertheless observe that \eqref{tau_bound} is a very reasonable condition: for instance, using the implied parameters given in table \ref{tab:accessible} for SX5E options, we find $\tau < 9. 49$, which is very close to the maximal expiry (10 years) quoted for options written on this underlying.

\paragraph{European} Exactly the same analysis can be performed on the European option, resulting in an error for the option price given by formula \ref{formula:sym_eur} bounded by
\begin{equation}
    \frac{ K \alpha e^{(\alpha\delta - r)\tau}}{\sqrt{\pi}} \, \epsilon
\end{equation}
after the computation of $n_\epsilon (n_\epsilon +1)$ terms (because the $n_2$ summation starts at $n_2=1$). To illustrate these observations, we summarize in table \ref{tab:speed} the minimal rank, number of terms and price errors obtained for the digital and European options for some realistic market parameters.

\begin{table}[ht]
 \caption{ Rank $n_\epsilon$ beyond which the series terms in formulas \ref{formula:sym_A/N}, \ref{formula:sym_eur} and \ref{formula:sym_c/n} are $0(\epsilon)$, and corresponding truncation error on the option prices. Parameters: $S_t$=3800, $K=4000$, $r = 1\%$, $q = 0\%$, $\tau = 1$, $\alpha=8.9932$, $\delta = 1.1528$.}
 \label{tab:speed}       
 \centering
 \begin{tabular}{cccc}
 \hline
  \multicolumn{4}{c}{{\bfseries Asset-or-nothing} (Formula \ref{formula:sym_A/N}) }  \\
 \hline
 $\epsilon$  & Minimal rank $n_\epsilon = 2 p_\epsilon + 1$  & Number of terms $(n_\epsilon +1)^2$  & Price error   \\
 $10^{-5}$  & 5 & 36  & 6.39064  \\
 $10^{-10}$  & 11 & 144  & 0.0639064  \\
 $10^{-15}$  & 15 & 256 &  $6.39064 \times 10^{-7}$ \\
 $10^{-20}$  & 21  & 484  & $6.39064 \times 10^{-12}$ \\
 \hline
  \multicolumn{4}{c}{{\bfseries European} (Formula \ref{formula:sym_eur}) }  \\
 \hline
 $\epsilon$  & Minimal rank $n_\epsilon = 2 p_\epsilon + 1$  & Number of terms $n_\epsilon(n_\epsilon +1)$  & Price error   \\
 $10^{-5}$  & 5 & 30  & 6.39064  \\
 $10^{-10}$  & 11 & 132  & 0.0639064  \\
 $10^{-15}$  & 15 & 240 &  $6.39064 \times 10^{-7}$ \\
 $10^{-20}$  & 21  & 462  & $6.39064 \times 10^{-12}$ \\
 \hline
  \multicolumn{4}{c}{{\bfseries Cash-or-nothing} (Formula \ref{formula:sym_c/n})}  \\
  \hline
 $\epsilon$  & Minimal rank $n_\epsilon = 2 p_\epsilon + 1$  & Number of terms $n_\epsilon +1$  & Price error   \\
 $10^{-5}$  & 5 &  6 & 0.00159766  \\
 $10^{-10}$  & 11 &  12 & 0.0000159766  \\
 $10^{-15}$  & 15 &  16 &  $1.59766 \times 10^{-10}$ \\
 $10^{-20}$  & 21  & 22  & $1.59766 \times 10^{-15}$\\
 \hline
\end{tabular}
\end{table}

\subsection{Comparisons with Fourier techniques}

\paragraph{Lewis formula}
We recall that, following \cite{Lewis01}, digital option prices admit convenient representation involving the risk-neutral characteristic function and the log-forward moneyness; the asset-or-nothing call can be written as
\begin{equation}\label{Lewis_A/N}
    C_{a/n} \, = \, S_t \, e^{-q\tau} \, \left( \, \frac{1}{2} \,+ \, \frac{1}{\pi}\int\limits_0^{\infty} \, Re \, \left[ \frac{e^{i u k} {\Psi_L}(u-i,\tau)}{iu} \right] \ud u \, \right)
    ,
\end{equation}
and the cash-or-nothing call as
\begin{equation}\label{Lewis_C/N}
    C_{c/n} \, = \, e^{-r\tau} \, \left( \, \frac{1}{2} \,+ \, \frac{1}{\pi}\int\limits_0^{\infty} \, Re \, \left[ \frac{e^{i u k} {\Psi_L}(u,\tau)}{iu} \right] \ud u \, \right)
    ,
\end{equation}
where, here, $k:=\log\frac{S_t}{K}+(r-q)\tau$, and where the characteristic function $\Psi(u,t) \, = \, e^{ t  \psi(u)}$ has been normalized by the martingale adjustment:
\begin{equation}
    {\Psi_L}(u,t) \, := \, e^{i u \omega t} \Psi(u,t) \, = \, 
    e^{i u \omega t  + i \mu u t  \,- \, \delta t
    \left( \sqrt{ \alpha^2 - (\beta + i u)^2 } - \sqrt{ \alpha^2 - \beta^2 }
    \right) }
    ,
\end{equation}
so that the martingale condition ${\Psi}_L (-i,t)=1$ holds true. In table \ref{tab:a/n}, we compare the asset-or-nothing prices obtained by an application of formula \ref{formula:sym_A/N} (truncated at $n_1=n_2=max$) and of formula \ref{formula:asym_A/N} (truncated at $n_1=n_2=n_3 = max$), with a numerical evaluation of the Lewis formula \eqref{Lewis_A/N} performed via a classical recursive algorithm on $[0,10^4]$. Same comparison is made in table \ref{tab:c/n} for the cash-or-nothing prices. We observe the excellent agreement between our analytical result and numerical ones, as well as the fast convergence of the series. The convergence is particularly accelerated in the ATM situation (for instance in the symmetric model, only 3 terms are needed to obtain a precision of $10^{-3}$ in the cash-or-nothing price). It is slightly slower for deep OTM options: this is because $k_0 \sim \log S_t$ when $S_t\rightarrow 0$, and therefore the positive powers of $k_0$ tend to slow down the overall convergence speed. Note also that the convergence is more rapid in the symmetric than in the asymmetric model, because we choose an implied parameter $ |\beta| > 1 $ complying with the calibrations in table \ref{tab:accessible}; if we had chosen $|\beta| < 1$, then the positive powers of $\beta$ would have accelerated the convergence of the asymmetric series.

\begin{table}[ht]
 \caption{Prices of asset-or-nothing call options, obtained by truncations of formulas \ref{formula:sym_A/N} and \ref{formula:asym_A/N}, and by a numerical evaluation of \eqref{Lewis_A/N}. Parameters: $K=4000$, $r = 1\%$, $q = 0\%$, $\tau = 1$, $\alpha=8.9932$, $\delta = 1.1528$.}
 \label{tab:a/n}       
 \centering
 \begin{tabular}{lccccc}
 \hline
 \multicolumn{6}{c}{{\bfseries Symmetric model} [$\beta = 0$] }  \\
 \hline
 & \multicolumn{4}{c}{Formula \ref{formula:sym_A/N}} & Lewis \eqref{Lewis_A/N} \\
 & $max=3$ & $max=5$ & $max=10$ & $max=15$ &   \\
 Deep OTM ($S_t=3000$)  & 861.9096  & 796.515 & 804.8118 &  804.9099 & 804.9097  \\
 OTM ($S_t=3500$)       &  1495.76986 & 1493.3213 & 1493.5276 & 1493.5278  & 1493.5278  \\
 ATM                   & 2309.8330  & 2313.6169 & 2313.7110 & 2313.7110  & 2313.7110  \\
 ITM ($S_t=4500$)        &  3163.3516 & 3170.7414 & 3170.9431 & 3170.9431  & 3170.9431   \\
 Deep ITM ($S_t=5000$)   &  3986.4269 & 3999.5086  & 3999.8854 & 3999.8852  & 3999.8852  \\
 \hline
\multicolumn{6}{c}{{\bfseries Asymmetric model} [$\beta = -4.5176$] }  \\
 \hline
 & \multicolumn{4}{c}{Formula \ref{formula:asym_A/N}} & Lewis \eqref{Lewis_A/N} \\
 & $max=10$ & $max=20$ & $max=30$ & $max=50$ &   \\
 Deep OTM ($S_t=3000$) & 1084.9112 & 991.4964 & 990.8328 & 990.8302  & 990.8302  \\
 OTM ($S_t=3500$)      & 1814.0381  & 1705.6678 & 1704.8935 &  1704.8905  & 1704.8905 \\
 ATM                 &  2593.7092 & 2480.0154 &  2479.11828  &  2479.1149 & 2479.1149  \\
 ITM ($S_t=4500$)      & 3310.5927 & 3252.0495 & 3250.4093 & 3250.4089  &  3250.4089 \\
 Deep ITM ($S_t=5000$) &  3777.9899 & 4003.6194 & 3989.4277 & 3989.7291 & 3989.7293  \\
 \hline
\end{tabular}
\end{table}

\begin{table}[ht]
 \caption{Prices of cash-or-nothing call options, obtained by truncations of formulas \ref{formula:sym_c/n} and \ref{formula:asym_c/n}, and by a numerical evaluation of \eqref{Lewis_A/N}. Parameters: $K=4000$, $r = 1\%$, $q = 0\%$, $\tau = 2$, $\alpha=8.9932$, $\delta = 1.1528$.}
 \label{tab:c/n}       
 \centering
 \begin{tabular}{lccccc}
 \hline
 \multicolumn{6}{c}{{\bfseries Symmetric model} [$\beta = 0$] }  \\
 \hline
 & \multicolumn{4}{c}{Formula \ref{formula:sym_c/n}} & Lewis \eqref{Lewis_C/N} \\
 & $max=3$ & $max=5$ & $max=10$ & $max=15$ &   \\
 Deep OTM ($S_t=3000$)  & 0.2127  & 0.2092  &  0.2095  &  0.2095  &  0.2095  \\
 OTM ($S_t=3500$)       &  0.3076 & 0.3073  & 0.3073   &  0.3073  &  0.3073  \\
 ATM                   & 0.4054  & 0.4054  &  0.4054  &  0.4054  &  0.4054    \\
 ITM ($S_t=4500$)       &  0.4973 & 0.4973  &  0.4973  & 0.4973   &  0.4973    \\
 Deep ITM ($S_t=5000$)   & 0.5793  & 0.5793  & 0.5793   &  0.5793  &  0.5793   \\
 \hline
\multicolumn{6}{c}{{\bfseries Asymmetric model} [$\beta = -4.5176$] }  \\
 \hline
 & \multicolumn{4}{c}{Formula \ref{formula:asym_c/n}} & Lewis \eqref{Lewis_C/N} \\
 & $max=10$ & $max=20$ & $max=30$ & $max=50$ &   \\
 Deep OTM ($S_t=3000$)  & 0.2579  & 0.2360 & 0.2357  & 0.2357 & 0.2357   \\
 OTM ($S_t=3500$)       &  0.3523 &  0.3244 & 0.3240  &  0.3240 & 0.3240   \\
 ATM                  &  0.4544 &  0.4077 & 0.4074  & 0.4074  & 0.4074    \\
 ITM ($S_t=4500$)       & 0.5740   &  0.4823 & 0.4827  &  0.4827  &  0.4827  \\
 Deep ITM ($S_t=5000$)  & 0.7634  & 0.7277  &  0.5733 & 0.5452  &  0.5489   \\
 \hline
\end{tabular}
\end{table}

\paragraph{Carr-Madan formula}
Regarding European options, we recall the representation given in \cite{Carr99} based on the introduction of a dampling factor $a$ to avoid the divergence in $u=0$; namely, let
\begin{equation}
    \Psi_{CM} (u,t) \, := \, e^{i u [\log S_t + ( r - q + \omega ) t]} \, \Psi (u,t) 
    ,
\end{equation}
then the European call price admits the representation:
\begin{equation}\label{CarrMadan}
    C_{eur} \, = \, \frac{e^{-a\log K - r\tau}}{\pi} \, \int\limits_0^{\infty} \,
    e^{-i u \log K}
    Re \left[  \frac{\Psi_{CM} (u-(a+1)i,\tau)}{a^2+a-u^2+i(2a+1)u}  \right]
    \, \ud u
    ,
\end{equation}
where $a<0<a_{max}$, and $a_{max}$ is determined by the square integrability condition $\Psi_{CM} (-(a+1)i,\tau)<\infty$. In table \ref{tab:eur_Carr_Madan} we compare the European prices obtained by formula \ref{formula:sym_eur} (truncated at $n_1=n_2=max$) and formula \ref{formula:asym_eur} (truncated at $n_1=n_2=n_3=max$), with a numerical evaluation of the Carr-Madan formula \eqref{CarrMadan} on the interval $[0,10^4]$. We also observe the excellent agreement between our analytical results and the numerical ones, as well as the accelerated convergence for very short term options. For instance, when $\tau=$ 1 day, $(1+5)^2$ iterations are enough to obtain a precision of $10^{-3}$ in the option price in the symmetric model; this is because, when $S_t$ is close to $K$, then $k_0 \sim (r-q+\omega)\tau $ and therefore when $\tau\rightarrow 0$ the positive powers of $k_0$ arising in formulas \ref{formula:sym_eur} and \ref{formula:asym_eur} accelerate the convergence of the series. Note that, on the contrary, the short maturity case is not a favorable situation for a numerical evaluation of the Carr-Madan formula, because of the presence of oscillations of the integrand that considerably slow down the numerical Fourier inversion process.

\begin{table}[ht]
 \caption{Prices of European call options of various maturities, obtained by truncations of formulas \ref{formula:sym_eur} and \ref{formula:asym_eur}, and by a numerical evaluation of \eqref{CarrMadan}. Parameters: $S_t=K=4000$, $r = 1\%$, $q=0\%$, $\alpha=8.9932$, $\delta = 1.1528$.}
 \label{tab:eur_Carr_Madan}       
 \centering
 \begin{tabular}{lccccc}
 \hline
 \multicolumn{6}{c}{{\bfseries Symmetric model} [$\beta = 0$] }  \\
 \hline
 & \multicolumn{4}{c}{Formula \ref{formula:sym_eur}} & Carr-Madan \eqref{CarrMadan} \\
 Maturity & $max=3$ & $max=5$ & $max=10$ & $max=15$ &   \\
 1 year  & 576.6432  &  580.4319 &   580.5260 & 580.5260  & 580.5260   \\
 1 month &  150.8024 & 150.8651  &  150.8656 & 150.8656  & 150.8656  \\
 1 week   & 60.9649  &  60.9746 &  60.9747 & 60.9747  & 60.9747 \\
 1 day    & 15.4503  &   15.4515 &  15.4515  &  15.4515  & 15.4515  \\
 \hline
\multicolumn{6}{c}{{\bfseries Asymmetric model} [$\beta = -4.5176$] }  \\
 \hline
 & \multicolumn{4}{c}{Formula \ref{formula:asym_eur}} & Carr-Madan \eqref{CarrMadan} \\
 Maturity & $max=10$ & $max=20$ & $max=30$ & $max=50$ &   \\
 1 year  & 790.330   & 679.6635  & 678.8152  & 678.8118  &  678.8118  \\
 1 month  & 173.6275   & 173.5547  & 173.5546  &  173.5546 &  173.5546  \\
 1 week   &  68.4327  &  68.4234 & 68.4234  & 68.4234  & 68.4234  \\
 1 day     &  16.7801  &  16.7790 &  16.7790 & 16.7790  & 16.7790  \\
 \hline
\end{tabular}
\end{table}

\paragraph{Fast Fourier Transform}
The Fast Fourier Transform (FFT) is an algorithm allowing to compute efficiently sums of the type
\begin{equation}\label{FFT}
    \sum_{j=1}^N e^{-i\frac{2\pi}{N}(j-1)(v-1)}x(j) \, , \hspace*{0.5cm} N\in\mathbb{N}
    .
\end{equation}
Such an algorithm can be successfully applied to provide numerical approximations of \eqref{CarrMadan}. Let $\kappa:=\log K$; following \cite{Carr99} we truncate the integral in \eqref{CarrMadan} and evaluate it by a trapezoidal rule for $N$ equally spaced panels of length $\eta$ (i.e., the upper bound in \eqref{CarrMadan} is truncated at $N\eta$); for $v=1,2,\dots N$, we introduce the collection of log-strikes
\begin{equation}
    \kappa_v \, := \, \frac{\lambda N}{2} \, + \, \lambda(v-1)
    ,
\end{equation}
generating $N$ log-strike prices located in the interval $[-\frac{\lambda N}{2} , \frac{\lambda N}{2}-\lambda]$. Choosing $\lambda=\frac{2\pi}{N\eta}$ allows to re-write \eqref{CarrMadan} as:
\begin{equation}\label{CarrMadan_Trapezoidal}
    C_{eur} \, \simeq \, \frac{e^{-a\kappa_v - r\tau}}{\pi} \, 
    \sum\limits_{j=1}^N
    Re\left[
    e^{-i\frac{2\pi}{N}(j-1)(v-1)}e^{i\frac{\lambda N}{2}(j-1)\eta} 
    x(j)
    \eta
    \right]
    ,
\end{equation}
where 
\begin{equation}
    x(j) \, := \, 
    \frac{\Psi_{CM}((j-1)\eta,\tau)}{a^2+a-((j-1)\eta)^2+i(2a+1)(j-1)\eta}
    .
\end{equation}
We can observe that \eqref{CarrMadan_Trapezoidal} is of the form \eqref{FFT}; using Simpson's rule weightings, \cite{Carr99} obtain the following approximation for the European call price:
\begin{equation}\label{CarrMadan_FFT}
    C_{eur} \, \simeq \, \frac{e^{-a\kappa_v - r\tau}}{\pi} \, 
    \sum\limits_{j=1}^N
    Re\left[
    e^{-i\frac{2\pi}{N}(j-1)(v-1)}e^{i\frac{\lambda N}{2}(j-1)\eta} 
    x(j)
    \frac{\eta}{3}
    (3+(-1)^j-\delta_{j-1})
    \right]
\end{equation}
where $\delta_x$ denotes the Kronecker symbol. In figure \ref{fig:FFT}, we fix $S_t=3000$, $\alpha=40$, $\delta=25$, $\tau=1/12$ (one month expiry), $r=1\%$, and we compare the results obtained by the FFT algorithm \eqref{CarrMadan_FFT} and applications of the pricing formula \ref{formula:sym_eur} for the European call in the symmetric model:
\begin{itemize}
    \item [-] Formula \ref{formula:sym_eur} is truncated at $n_1=n_2=30$; there are, therefore $31\times 30 = 930$ terms to compute but, as discussed in subsection \ref{subsec:practical}, $14\times 15 = 210$ terms are actually equal to zero, and therefore there are only $720$ non null terms to compute;
    \item[-]  In the FFT algorithm we follow \cite{Carr99} and choose $\eta=0.25$ for the spacing parameter, and we choose $N=500$ (left graph) or $N=1000$ (right graph) for the truncation parameter.
\end{itemize}
We observe that, in both cases, the pricing formula \ref{formula:sym_eur} and the FFT algorithm display excellent agreement. However, in the case $N=500$, only $18$ strikes are attainable in the interval $[2000,5000]$ via the FFT method, and 36 when $N=1000$; in the first (resp. second) case, strikes prices are separated by 100 to 200 (resp. 50 to 100) points. To get a collection of strikes separated by only 10 points (at least when one is not too far from the money), one would need to choose $N=5000$, and even $N=50 \, 000$ to get consecutive prices. This is to be compared with the $720$ terms needed by the pricing formula \ref{formula:sym_eur} to provide a continuum of strikes across the whole interval.

\begin{figure}[H]
\centering
\includegraphics[scale=0.6]{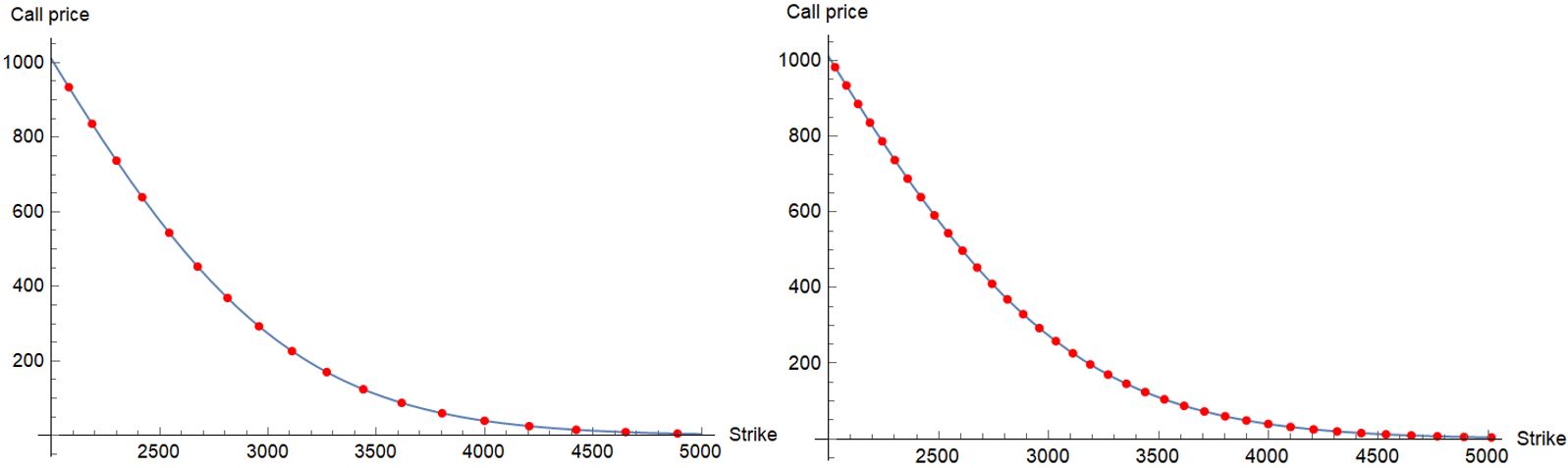}
\caption{Blue curve: pricing formula \ref{formula:sym_eur} truncated at $n_1=n_2=30$; red dots: FFT algorithm \eqref{CarrMadan_FFT} with truncation parameter $N$ and spacing parameter $\eta=0.25$. In the left graph, $N=500$, and in the right graph, $N=1000$.}
\label{fig:FFT}
\end{figure}

\subsection{Comparisons with Monte Carlo simulations}
Let $n\in\mathbb{N}\backslash \{0\}$ and define the family of independent and identically distributed random variables $Z^{(i)}$, $i=1\dots n$, all distributed according to the symmetric NIG distribution $ Z^{(i)} \, \sim \, \mathrm{NIG} (\alpha, 0 ,\delta, \mu)$, and define
\begin{equation}
   C_{log}^{(i)} \, := \, e^{-r\tau} \, \left[ \log \left( \frac{S_t}{K}e^{(r-q+\omega)\tau + Z^{(i)}} \right)   \right]^+ \, = \, e^{-r\tau} \, [k_0 + Z^{(i)}]^+
\end{equation}
as well as
\begin{equation}\label{MC_LogCall}
    C_{log}^{(n)} \, := \, \frac{1}{n} \, \sum\limits_{i=1}^n C_{log}^{(i)}
    .
\end{equation}
We know from the strong law of large numbers that $C_{log}^{(n)}$ converges to the price of the log call option, more precisely that
\begin{equation}
    C_{log}^{(n)}\,  \longrightarrow \, \mathbb{E}^{\mathbb{Q}} \left[ e^{-r\tau} \left[ \log\frac{S_T}{K} \right]^+ \, \vert \, S_t \right]
\end{equation}
almost surely when $n\rightarrow\infty$. Similarly, regarding power options, we define (in the European case):
\begin{equation}\label{MC_PowerCall}
     C_{pow}^{(i)} \, := \, e^{-r\tau} \,
    \left[ S^a e^{a ( (r - q +\omega)\tau + Z^{(i)} ) } \, - \, K    \right]^+  
    \, , \hspace{0.5cm}
      C_{pow}^{(n)} \, := \, \frac{1}{n} \sum\limits_{i=1}^n C_{pow}^{(i)}
\end{equation}
and, for the capped digital option,
\begin{equation}\label{MC_CappedCall}
     C_{capped\,c/n}^{(i)} \, := \, e^{-r\tau} \,
    \mathbbm{1}_{ \{ -k_{0,-} < Z^{(i)} < -k_{0,+}  \} }  
    \, , \hspace{0.5cm}
      C_{capped \, c/n}^{(n)} \, := \, \frac{1}{n} \sum\limits_{i=1}^n C_{capped \, c/n}^{(i)}
\end{equation}
which converge to the European power call and to the capped cash-or-nothing call respectively. In table \ref{tab:log_power_MC}, we compare the results obtained via the Monte Carlo simulations \eqref{MC_LogCall}, \eqref{MC_PowerCall} and \eqref{MC_CappedCall} for different number of paths, with truncations of the pricing formulas \ref{form:log}, \ref{form:pow} and of \eqref{sym_cap_c/n}. As expected, the results display good agreement, but our series provide a far more precise price and a far more rapid convergence: for instance, only 2 to 4 terms are needed to obtain a level of precision of $10^{-3}$ for the log call using formula \ref{form:log}, while the Monte Carlo price still features a relative error of 1\% in the OTM case and even 4\% in the ITM case. Note also that, defining the  95\% confidence interval by $C_{log}^{(n)} \pm 1.96 \, \sigma_P / \sqrt{n}$ where
\begin{equation}
    \sigma_P \, := \, \sqrt{\mathrm{var} \{ C_{log}^{(i)} \}_{i=1 \dots n}  }
    ,
\end{equation}
then its length vary between $0.0136$ (OTM case) and $0.0187$ (ITM case) after $n=1000$ paths. Of course the confidence interval could be reduced by increasing the number of paths (but then the Standard Monte Carlo becomes time and resource consuming) or by introducing variance reduction techniques, such as antithetic variates or importance sampling methods (see \cite{Su00} or the classical monograph \cite{Glasserman04}). On the contrary, with our series expansions, the results are quasi instantaneous and can easily be made as precise as one wishes, without introducing further sophistication.

\begin{table}[ht]
 \caption{Prices of log, power and capped calls, obtained by Monte Carlo simulations ($n$ paths) or truncation of formulas \ref{form:log}, \ref{form:pow} and series \eqref{sym_cap_c/n}. Parameters: $K_-=K =4000$, $K_+=5000$, $r = 1\%$, $q = 0\%$, $\tau = 2$, $\alpha=8.9932$, $\delta = 1.1528$, $a=1.2$.}
 \label{tab:log_power_MC}       
 \centering
 \begin{tabular}{lcccccc}
 \hline
 \multicolumn{7}{c}{{\bfseries Log option (call) } }  \\
 \hline
 & \multicolumn{3}{c}{ Monte Carlo \eqref{MC_LogCall} } & \multicolumn{3}{c}{ Formula \ref{form:log} } \\
 &  $n=100$  &  $n=500$ & $n=1000$  & $n_{max} = 1$  &  $n_{max} = 3$  &  $n_{max} = 5$\\ 
 OTM ($S_t=3500)$  & 0.0826  & 0.1034 & 0.1002 & 0.1012 & 0.1008 & 0.1008 \\
 ATM ($S_t=4000)$  & 0.1537  & 0.1508 & 0.1509 & 0.1483 & 0.1482 & 0.1482 \\
 ITM ($S_t=4500)$  & 0.2428  & 0.2255 & 0.1923 & 0.2014 & 0.2014 & 0.2014 \\
 \hline
  \multicolumn{7}{c}{{\bfseries Power option (European) } }  \\
 \hline
 & \multicolumn{3}{c}{ Monte Carlo \eqref{MC_PowerCall} } & \multicolumn{3}{c}{ Formula \ref{form:pow} } \\
 &  $n=100$  &  $n=1000$ & $n=5000$  & $max = 20$  &  $max = 40$  &  $max = 60$\\ 
 OTM ($S_t=3500)$  &  12943.90  & 13976.71  & 14456.01  & 1429.53  & 14629.84  & 14629.84  \\
 ATM ($S_t=4000)$  &  17229.06  & 17263.31  & 17678.74  & 17843.79  &  17847.18 & 17847.18  \\
 ITM ($S_t=4500)$  &  20719.09  & 20310.75  & 21422.76  & 21126.01  & 21148.88  & 21148.89  \\
 \hline
 \multicolumn{7}{c}{{\bfseries Capped option (digital)} }  \\
 \hline
 & \multicolumn{3}{c}{ Monte Carlo \eqref{MC_CappedCall} } & \multicolumn{3}{c}{ Series \eqref{sym_cap_c/n} } \\
 &  $n=100$  &  $n=1000$ & $n=5000$  & $n_{max} = 1$  &  $n_{max} = 5$  &  $n_{max} = 10$\\ 
 OTM ($S_t=3500)$  & 0.1764  & 0.1519  & 0.1262  & 0.1754  & 0.1355  & 0.1347 \\
 ATM ($S_t=4000)$  & 0.1862  & 0.1608  & 0.1598  & 0.1754  & 0.1575  & 0.1575 \\
 ITM ($S_t=4500)$  & 0.2058  & 0.1774  & 0.1672  & 0.1754  & 0.1702  & 0.1702 \\
 \hline
\end{tabular}
\end{table}

\section{Concluding remarks}

In this paper, we have proved two general formulas for pricing arbitrary path independent instruments in the exponential NIG model, in the symmetric and asymmetric cases. These formulas allow to express the Mellin transform of the instrument's price as the product of the Mellin transform of the instrument's payoff and of the NIG probability density. Inverting the formulas by means of residue theory in $\mathbb{C}$ and $\mathbb{C}^n$ has allowed us to derive practical closed-form pricing formulas for various path independent options and contracts, under the form of quickly convergent series. The convergence of the series is guaranteed as soon as a simple condition of the log forward moneyness and on the option's maturity is fulfilled. We have tested our results by comparing them with classical numerical methods, and provided precise estimate for the convergence speed; notable feature is that a very reasonable number of terms is required to obtain an excellent level of precision, and that the convergence is particularly fast for short term and at the money options.

Future work should include, among others, an extension of the Mellin residue summation method to path independent instruments on several assets, and to path dependent instruments. Asian options with continuous geometric payoffs, in particular, should be investigated, because the characteristic function for the geometric average is known exactly in the exponential NIG model (see \cite{Fusai08}), for both fixed and floating strikes.

Extension of the technique to Generalized Hyperbolic (GH) L\'evy motions (see \cite{Prause99,Eberlein01}) should also be considered; indeed, the probability density of the GH distribution has a very similar form to the NIG density \eqref{NIG_distribution_density}, which, at first sight, allows for the same convenient representation in terms of Mellin-Barnes integrals for the Bessel kernel. However, GH distributions are not convolution-closed, that is, the L\'evy processes they generate are not necessarily distributed according to a GH distribution for increments of length $t\neq 1$ (exceptions being the NIG process, which, as we know, is distributed according to a NIG distribution $\mathrm{NIG}(\alpha,\beta,\delta t,\mu t)$ for all $t$, as well as the generalized Laplace or Variance Gamma distribution). As a consequence, a Mellin-Barnes integral representation for the density of the GH process is not straightforward to derive, but could nevertheless be obtained from the moment generating function, after suitable transformations from the Laplace space to the Mellin space.

\section*{Acknowledgments}

The author thanks Ryan McCrickerd for insightful comments and discussions. The author also thanks two anonymous Reviewers and the Managing Editor for their careful reading of the manuscript, and their valuable remarks and suggestions.

\singlespacing

\doublespacing

\appendix

\section{Brief review of the Mellin transform}\label{app:Mellin}

We present an overview of the one-dimensional Mellin transform; this theory is explained in full detail in \cite{Flajolet95}, and table of Mellin transforms can be found in any monograph on integral transforms (see e.g. \cite{Bateman54}).


\noindent 1. The Mellin transform of a locally continuous function $f$ defined on $\mathbb{R}^+$ is the function $f^*$ defined by
\begin{equation}\label{Mellin_def}
    f^*(s) \, := \, \int\limits_0^\infty \, f(x) \, x^{s-1} \, \ud x
    .
\end{equation}
The region of convergence $\{ \alpha < Re (s) < \beta \}$ into which the integral \eqref{Mellin_def} converges is often called the fundamental strip of the transform, and sometimes denoted $ < \alpha , \beta  > $.

\noindent 2. The Mellin transform of the exponential function is, by definition, the Euler Gamma function:
\begin{equation}\label{Gamma_def}
    \Gamma(s) \, = \, \int\limits_0^\infty \, e^{-x} \, x^{s-1} \, \ud x
\end{equation}
with strip of convergence $\{ Re(s) > 0 \}$. Outside of this strip, it can be analytically continued, except at every negative $s=-n$ integer where it admits the singular behavior
\begin{equation}\label{sing_Gamma}
    \Gamma(s) \, \underset{s\rightarrow -n}{\sim} \, \frac{(-1)^n}{n!}\frac{1}{s+n} \, ,
    \hspace*{0.5cm} n\in\mathbb{N}
    .
\end{equation}
In table \ref{tab:Mellin} we summarize the main Mellin transforms used in this paper, as well as their convergence strips.
\begin{table}[ht]
 \caption{Mellin pairs used throughout the paper.} 
 \label{tab:Mellin}       
 \centering
 \begin{tabular}{|c|c|c|}
 \hline
 $f(x)$  &  $f^*(s)$  &  Convergence strip  \\
 \hline   
 $e^{-a x}$       &  $a^{-s} \Gamma(s)$  &  $< 0 , \infty >$  \\
 $e^{-a x} - 1 $  &  $a^{-s} \Gamma(s)$  &  $< -1 , 0 >$  \\
 $\K_{\nu}(ax)$    & $a^{-s}2^{s-2} \Gamma\left( \frac{s-\nu}{2} \right)\Gamma\left( \frac{s+\nu}{2} \right)$  &   $ < |Re(\nu)|, \infty > $ \\
 $\frac{\K_\nu(a\sqrt{x^2+b^2})}{(x^2+b^2)^{\frac{\nu}{2}}}$  & $a^{\frac{s}{2}}2^{\frac{s}{2}-1}b^{\frac{s}{2}-\nu}\Gamma(\frac{s}{2})\K_{\nu-\frac{s}{2}}(ab)$  & $< 0 , \infty >$   \\
 \hline
\end{tabular}
\end{table} 

\noindent 3. The inversion of the Mellin transform is performed via an integral along any vertical line in the strip of convergence:
\begin{equation}\label{inversion}
    f(x) \, = \, \int\limits_{c-i\infty}^{c+i\infty} \, f^*(s) \, x^{-s} \, \frac{\ud s}{2i\pi} \hspace*{1cm} c\in ( \alpha, \beta )
\end{equation}
and notably for the exponential function one gets the so-called \textit{Cahen-Mellin integral}:
\begin{equation}\label{Cahen}
    e^{-x} \, = \, \int\limits_{c-i\infty}^{c+i\infty} \, \Gamma(s) \, x^{-s} \, \frac{\ud s}{2i\pi}, \hspace*{0.5cm} c>0
    .
\end{equation}

\noindent 4. When $f^*(s)$ is a ratio of products of Gamma functions of linear arguments:
\begin{equation}
    f^*(s) \, = \, \frac{\Gamma(a_1 s + b_1) \dots \Gamma(a_m s + b_m)}{\Gamma(c_1 s + d_1) \dots \Gamma(c_l s + d_l)}
\end{equation}
then one speaks of a \textit{Mellin-Barnes integral}, whose \textit{characteristic quantity} is defined to be
\begin{equation}
    \Delta \, = \, \sum\limits_{k=1}^m \, a_k \, - \, \sum\limits_{j=1}^l \, c_j
    .
\end{equation}
$\Delta$ governs the behavior of $f^*(s)$ when $|s|\rightarrow \infty$ and thus the possibility of computing \eqref{inversion} by summing the residues of the analytic continuation of $f^*(s)$ right or left of the convergence strip:
\begin{equation}
    \left\{
    \begin{aligned}
        & \Delta < 0 \hspace*{1cm} f(x) \, = \, -\sum\limits_{Re(s) > \beta} \, \res \left[f^*(s)x^{-s}\right],  \\
        & \Delta > 0 \hspace*{1cm} f(x) \, = \, \sum\limits_{Re(s) < \alpha} \, \res \left[f^*(s)x^{-s}\right]
        .
    \end{aligned}
    \right.
\end{equation}
For instance, in the case of the Cahen-Mellin integral one has $\Delta = 1$ and therefore:
\begin{equation}
e^{-x} \, = \, \sum\limits_{Re(s)<0} \res \left[ \Gamma(s) \, x^{-s} \right] \,  =  \, \sum\limits_{n=0}^{\infty} \, \frac{(-1)^n}{n!}x^n
\end{equation}
as expected from the usual Taylor series of the exponential function.

\section{Some useful special functions identities}\label{app:special}

We list some properties of special functions that are used throughout the paper; more details can be found e.g. in \cite{Abramowitz72,Andrews92}.

\subsection{Gamma function}
\paragraph{Particular values} The Gamma function $\Gamma(s)$ has been defined in \eqref{Gamma_def} for $Re(s)>0$; integrating by parts shows that it satisfies the functional relation $\Gamma(s+1) = s\Gamma(s)$; as $\Gamma(1)=1$, it follows that 
\begin{equation}
    \Gamma(n+1) = n! \, , \hspace{0.3cm} n\in\mathbb{N}
\end{equation}
and that the analytic continuation of $\Gamma(s)$ to the negative half-plane is singular at every negative integer $-n$ with residue $\frac{(-1)^n}{n!}$. Other useful identities include $\Gamma(\frac{1}{2}) = \sqrt{\pi}$ and, more generally,
\begin{equation}\label{gamma_half_integers}
    \left\{
    \begin{aligned}
        & \Gamma\left( \frac{1}{2} - n \right) \, = \, \frac{(-1)^n 4^n n!}{(2n)!} \, \sqrt{\pi} 
        \\
        & \Gamma\left( \frac{1}{2} + n \right) \, = \, \frac{(2n)!}{4^n n!} \, \sqrt{\pi}
        .
    \end{aligned}
    \right.
\end{equation}
for $n\in\mathbb{N}$.

\paragraph{Stirling approximation} We recall the well-known Stirling approximation for the factorial:
\begin{equation}\label{Stirling}
    n! \, \underset{n\rightarrow\infty}{\sim} \, \sqrt{2\pi n} \, n^n \, e^{-n}
    .
\end{equation}

\paragraph{Legendre duplication formula} For any $s \in \mathbb{C}$, we have:
\begin{equation}\label{Legendre}
    \frac{\Gamma\left( \frac{s}{2} \right)}{\Gamma(s)} \, = \, \frac{\sqrt{\pi}}{2^{s-1}} \, \frac{1}{\Gamma\left( \frac{s+1}{2} \right)}
    .
\end{equation}

\paragraph{Pochhammer symbol} The Pochhamer symbol $(a)_n$, sometimes denoted by the Appel symbol $(a,n)$, and also called rising factorial, is defined by
\begin{equation}\label{Pochhammer_def}
    (a)_n \, := \, \frac{\Gamma(a+n)}{\Gamma(a)} \, , \hspace{0.3cm} a \notin \mathbb{Z}_-
    .
\end{equation}
The definition \eqref{Pochhammer_def} extends continuously to negative integers thans to the functional relation $\Gamma(s+1) = s \Gamma(s)$, thanks to the relation:
\begin{equation}\label{Pochhammer_def_neg}
    (-k)_n \, = \,
    \left\{
    \begin{aligned}
        & \frac{(-1)^n k!}{(k-n)!} & 0\leq n <k\\
        & 0  &  n>k    
        .
    \end{aligned}    
    \right.
\end{equation}
where $k\in\mathbb{N}$.

\subsection{Bessel functions}

The modified Bessel function of the second kind, also called MacDonald function, can be defined by the Mellin integral
\begin{equation}\label{Bessel_def}
    \K_\nu(z) \, := \, \frac{1}{2} \, \left(\frac{z}{2}\right)^\nu \,  \int\limits_0^{\infty} \, 
    e^{-t - \frac{z^2}{4t} } \, t^{-\nu - 1} \, \ud t
\end{equation}
for $|\mathrm{arg} z| < \frac{\pi}{4}$. It follows that $\K_\nu(z)$ has the symmetry property:
\begin{equation}\label{Bessel_sym}
    \K_{\nu} (z) \, = \, \K_{-\nu} (z)
\end{equation}
and has monotonous absolute values:
\begin{equation}\label{Bessel_monotonous}
    0 \leq \nu_1 < \nu_2 \,\, \Longrightarrow \,\, \vert \K_{\nu_1} (z) \vert \, <\, \vert \K_{\nu_2} (z) \vert
    .
\end{equation}

\paragraph{Large index} When $\nu\rightarrow \infty$, one has the following behavior:
\begin{equation}\label{Bessel_large_index}
    \K_\nu(z) \, \underset{\nu\rightarrow\infty}{\sim} \, 
    \sqrt{\frac{\pi}{2\nu}} \, \left( \frac{ez}{2\nu} \right)^{-\nu}
    .
\end{equation}

\paragraph{Large argument (Hankel's expansion)} Define the following sequence:
\begin{equation}\label{Hankel_a}
    \left\{
    \begin{aligned}
        & a_0(\nu) \, = \, 1 \\
        & a_k(\nu) \, = \, \frac{(4\nu^2 - 1^2)(4\nu^2-3^2) \dots (4\nu^2 - (2k-1)^2)}{k! 8^k} \, , \hspace{0.3cm} k\geq 1
        .
    \end{aligned}
    \right.
\end{equation}
Then, for large $z$ and fixed $\nu$, we  have:
\begin{align}\label{Bessel_large_z}
    \K_\nu(z) & \underset{z\rightarrow\infty}{=} \sqrt{\frac{\pi}{2z}} \, e^{-z} \, \sum\limits_{k=0}^{\infty} \, \frac{a_k(\nu)}{z^k}
            .
\end{align}
In particular, when $4\nu^2-1 = 0$, i.e. when $\nu=\frac{1}{2}$, all the $a_k(\nu)$ are null in definition \eqref{Hankel_a} when $k\geq 1$, and we are left with:
\begin{equation}\label{Bessel_1/2}
    \K_{\frac{1}{2}}(z) \, = \, \sqrt{\frac{\pi}{2z}} \, e^{-z}
\end{equation}
for all z.



\end{document}